\newtheorem{observation}{Observation}
\def\qed{\hfill\rule{2mm}{2mm}}
\newcommand{\ignore}[1]{}
\newcommand{\dense}[1]{$#1$-{dense}}
\def\cd{\dense{c}}
\newcommand{\pr}[2]{\mathop{\mathbb P}_{#1}\displaylimits \left[#2\right]}
\newcommand{\Exp}[2]{\mathop{\mathbb E}_{#1}\displaylimits \left[#2\right]}
\def\watch{\textsc{Persistence}}
\begin{document}

\title{Know When to Persist: Deriving Value from a Stream Buffer
\thanks{An extended abstract of this paper appeared in
the LNCS Springer Proceedings of the 11th  International Conference on Algorithmic Aspects of Information and Management (AAIM 2016), July 18--20 2016, Bergamo, Italy.}
}
\author{
Konstantinos Georgiou\inst{1}
\and 
George Karakostas\inst{2}
\and
Evangelos Kranakis\inst{3}\thanks{Research supported in part by NSERC Discovery grant.}
\and
Danny Krizanc\inst{4}
}

\institute{
Department of Mathematics, Ryerson University
\and
Dept. of Computing \& Software, McMaster University
\and
School of Computer Science, Carleton University
\and
Department of Mathematics \& Computer Science,
Wesleyan University
}
\maketitle

\begin{abstract}
We consider \watch, a new online problem concerning optimizing weighted observations in a stream of data when the observer has limited buffer capacity. A stream of weighted items arrive one at a time at the entrance of a buffer with two holding locations. A processor (or observer) can process (observe) an item at the buffer location it chooses, deriving this way the weight of the observed item as profit. 
The main constraint is that the processor can only move {\em synchronously} with the item stream; as a result, moving from the end of the buffer to the entrance,
it crosses paths with the item already there, and will never have the chance to process or even identify it. 
\watch\ is the online problem of scheduling the processor movements through the buffer so that its total derived value is maximized under this constraint.
We study the performance of the straight-forward heuristic {\em Threshold}, i.e., forcing the processor to "follow" an item through the whole buffer only if its value is above a threshold. We analyze both the optimal offline and Threshold algorithms in case the input stream is a random permutation, or its items are iid valued. 
We show that in both cases the competitive ratio achieved by the Threshold algorithm is at least $2/3$ when the only statistical knowledge of the items is the median of all possible values. We generalize our results by showing that Threshold, equipped with some minimal statistical advice about the input, achieves competitive ratios in the whole spectrum between $2/3$ and $1$, 
following the variation of a newly defined density-like measure of the input.
This result is a significant improvement over the case of arbitrary input streams, since in this case we show that no online algorithm can achieve a competitive ratio better than $1/2$.
 \end{abstract}

\section{Introduction}

Suppose that the Automated Quality Control (AQC) of an assembly line has the ability to check all new parts as they enter 
the assembly line. Every such check increases our quality confidence by a certain percentage, which depends on the
nature of the part itself. Now, suppose that the AQC is given the option of a second look at the same part in the
next time slot, with a similar increase in our quality confidence. The downsize of this option, is that when the AQC 
returns to the beginning of the assembly line, it will have completely missed the part immediately following the one
that was double-checked. We are looking for an algorithm to decide whether to take the option or not with every new item.
Obviously, a good strategy would strive to look twice at "low-quality" items, since that would imply the greatest increases 
to our confidence, while ``missing" only pristine-looking ones.   

This problem falls within the data stream setting: a sequence of input 
data is arriving at a very high rate,
but the processing unit has limited memory to store and process 
the input. Data stream algorithms have been
explored extensively in the computer science literature.
Typical algorithms in this area work with 
only a few passes (often just one) over the data input and use memory space 
less than linear in the input size.
Applications can be found in processing 
cell phone calls or Internet router data, 
executing Web searches, etc. (cf. ~\cite{muthukrishnan2005data,muthukrishnan2009data1}).

In this work we study a new online problem in data stream processing with limited buffer capacity.
An online stream of items (the parts in our AQC example) arrives (one item at a time) at a buffer with  
two locations $L_0, L_1$ (assembly points 1 and 2 respectively in the example above), staying at each location for one 
unit of time, in this order. A processor/observer (the AQC) can move between the two locations
{\em synchronously}, i.e., its movements happen at the same time as the items move.
This means that if the processor is processing (observing) the $i$-th item in time $t$ at $L_0$, moving to $L_1$ will
result in processing again the $i$-th item at $L_1$ in time $t+1$. On the contrary,
if the processor is processing the $i$-th item in time $t$ at $L_1$, moving to $L_0$ will
result in processing the $i+2$-th item at $L_0$ in time $t+1$; the $i+1$-th item has already
moved to $L_1$ and will leave the buffer without the processor ever encountering it! (just like the
AQC totally missed a part). We emphasize that we restrict the processor to not even know
what item it missed (i.e., cannot ``see'' into a location other than its current one).
Processing the $i$-th item (either in $L_0$ or $L_1$) produces an added value or payoff.
The processor has very limited (constant in our results) memory capacity, and cannot keep more than a few
variables or pieces of data. The problem we address is whether such a primitive processor can have a strategy to persist and observe (if possible) mostly ``good values'', especially when compared to an optimal algorithm that is aware of the input stream. We call this online problem \watch, which to the best of our knowledge is also new. 


\ignore{
Depending on the amount of a priori knowledge the processor has, there is an {\em oblivious} version of the model (i.e., the processor initially knows nothing about the input stream), and the {\em non-oblivious} version, where the processor knows that the payoffs come from a (known) set of possible values. In this work we deal with the non-oblivious setting. In addition, there are many other obvious extensions, such as different buffer sizes (instead of size 2 in this work), different payoffs for different buffer locations (instead of the uniform payoffs in this work), etc.      
}

{\bf Related Work: }
There is extensive literature on data stream algorithms.
Here the emphasis is on input data arriving at a very high rate and
limited memory to store and process 
the input (thus stressing a tradeoff between 
communication and computing infrastructure). 
A general introduction to data stream algorithms
and models can be found in
\cite{muthukrishnan2005data,muthukrishnan2009data1}.
Lower bound models for space complexity 
are elaborated in \cite{alon1996space}.
In the section on new directions for streaming models,
\cite{muthukrishnan2009data1} discusses several
alternatives for data streams for 
permutation streaming of non-repeating items \cite{ajtai2002approximate},
windowed streaming whereby the most recent past is more
important than the distant past \cite{hoffman2004location},
as well as reset model,
distributed continuous computation,
and synchronized streaming.
Applications of data stream algorithms are
explored extensively in the computer science literature
and can be found
in sampling (finding quantiles \cite{greenwald2001space}, 
frequent items \cite{manku2002approximate}, 
inverse distribution \cite{cormode2005summarizing}, and
range-sums of items \cite{alon2005estimating}).

Related to our study is the well-known {\em secretary
problem} which appeared in the late 1950s and early
1960s (see \cite{ferguson1989solved} for a historical
overview of its origins and \cite{freeman1983secretary}
which discusses several extensions). 
It is concerned with the optimal strategy or
stopping rule so as to maximize the probability of 
selecting the best job applicant assuming that 
the selection decision can be deferred to the end.
Typically we are concerned with
maximizing the probability of selecting the best job applicant;
this can be solved by a maximum selection algorithm which
tracks the running maximum,
The problem has fostered the curiosity of numerous 
researchers and studied extensively
in probability theory and decision theory.
Several variants have appeared in the scientific literature, including
on rank-based selection and cardinal payoffs \cite{bearden2006new},
the infinite secretary problem in \cite{gianini1976infinite},
secretary problem with uncertain employment in \cite{smith1975secretary},
the submodular secretary problem in \cite{bateni2013submodular},
just to mention a few. The ``secretary problem'' paradigm has important
applications in computer science of which it 
is worth mentioning the recent work of
\cite{babaioff2007matroids} which studies the relation
of matroids, secretary problems, and online mechanisms,
as well as \cite{kleinberg2005multiple}
which is investigating applications of a multiple-choice secretary 
algorithm to online auctions.
Obviously the secretary problem differs from \watch\ 
in terms of the objective function: in our case the payoff is the sum of processing payoffs,
as opposed to the maximum for the secretary problem. The two problems also differ in the
synchronicity and location
of arrivals, i.e., what can be accessed
and how it is accessed. Nevertheless, the two problems share the inherent difficulty of having to
make decisions {\em on the spot} while missing parts of the input altogether.

\subsection{High Level Summary of our Results \& Outline of the Paper}

Our primary focus is the study of the \watch\ problem, which we formally define in Section~\ref{sec: model and problem def}. Our goal is to compare the performance of any primitive (online) algorithm, which is not aware of the input stream, against the optimal offline algorithm. 
In Section~\ref{sec: persistence strategies} we present all such possible primitive algorithms that we call {\em Threshold}. Subsequently, in Section~\ref{sec: comp analysis det} we analyze the performance of any Threshold online algorithm for deterministic input streams. 
Our findings indicate that simplistic primitive algorithms are actually optimal (among all online solutions), and are off no more than 1/2 the performance of an optimal (offline) algorithm that is aware of the entire input. 
Similar to the setting of the secretary and other online decision problems, this motivates the study of \watch\ problems when the input is random, which is also our main focus. 

Our main contributions are discussed in detail in Section~\ref{sec: comp analysis det}. At a high level, we show that when the online observer (processor)  knows the median of the possible random values that can appear in the input stream, then it is possible to perform observations in a way such that the total payoff is asymptotically at least 2/3 of the optimal offline solution (Theorem~\ref{thm: comp 2/3 for all}).
Moreover, we prove that when the random input streams come from certain natural families of inputs in which the mass of possible values is concentrated in relatively few heavy items, the asymptotic performance of very primitive algorithms is nearly optimal. In fact, we parameterize the performance of online algorithms for such inputs using a proper density measure, and we show how the relative asymptotic performance changes from almost optimal (competitive ratio almost 1) to competitive ratio 2/3 (Theorem~\ref{thm: cd inputs combined}). 

The results discussed above are just the byproduct of our main technical contributions that pertain to an analytic exposition of the performance of optimal offline and any online algorithm for random inputs, parameterized by a proper statistical density-like measure on inputs. 
The two random models that we study are input streams that are either random permutations (Section~\ref{sec: permutations}) or input streams whose elements assume independent and identically distributed values (Section~\ref{sec: iid}). In each case we provide closed formulas for the performance of the optimal offline algorithm and any online algorithm (Sections~\ref{sec: performance perm} and \ref{sec: performance iid} respectively), which we think is interesting in its own right. Then we use the closed formulas to derive the promised asymptotic competitive analysis in Sections~\ref{sec: comp analysis perm} and \ref{sec: comp analysis iid} respectively. 

We emphasize that the analysis of a size-2 buffer we provide is technically involved, 
and we cannot see how it could be extended to larger buffers without considerable 
extra effort. But even for this restricted case, the problem is interesting. Indeed,
given our model of algorithms allowed (streaming algorithms with a constant-size memory that
can keep only a few variable values, i.e., memoryless), the fact that the simple threshold
algorithm achieves non-trivial improvements is already a rather surprising result.

\section{Preliminaries}

\subsection{Model \& Problem Definition}
\label{sec: model and problem def}

Assume that $n$ incoming data values $v_1, v_2, \ldots, v_{n-1}, v_n$ arrive sequentially and synchronously from the left one at a time at a processing unit consisting of two registers $L_0$ and $L_1$ which are capable of storing these values instantaneously (see also Figure~\ref{fig:ln}). 
The values pass first through location (register) $L_0$ and then through location $L_1$, before exiting. A processing unit can process (i.e., derive some payoff from or contribute some additive value to) an item either in $L_0$ or $L_1$.
The value $v_i$ derived by processing item $i$ comes from a set of possible values $a_0< a_1< \cdots < a_{k-1}$, and is independent of the location that processing happened.
The main constraint is that all processing is {\em synchronous}, i.e., at every time unit exactly one new item enters $L_0$ and the processor (observer) is allowed to either do some processing (observe) at the location it's
already in, or perform a single move (and then do processing in) to the other location. 
The other important constraint is the fact that the processor has only a constant-size memory (i.e., it has space to hold at most $O(1)$ variables) as well as it is only aware of the value of the register of its location. In particular, when processor is located at one register, it is {\em oblivious} to the value of the other register. 

\begin{figure}[!htb]
\begin{center}
\includegraphics[width=10cm]{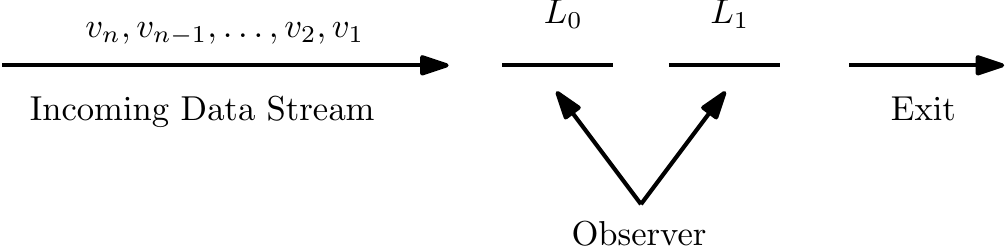}
\end{center}
\caption{Incoming values arriving sequentially and synchronously 
from the left one at a time and occupying first location $L_0$ and
then location $L_1$. An observer (processor) can look at only one of 
these two locations at a time. Data is exiting synchronously from the right.}
\label{fig:ln}
\end{figure}

More formally, our model is the following:
\begin{enumerate}
\item At time step $t=1$, the processor (observer) occupies position $L_0$, which holds value $v_1$.
\item At time step $t\geq 2$, the following take place in that order:
\begin{itemize}
\item The processor may change the location it is about to process (observe); at the same time, 
locations $L_0, L_1$ get (new) values $v_t,v_{t-1}$ respectively.
\item Processing is done at the location of the processor; the added value achieved at $t$ 
is the value of the item in that location. 
\end{itemize}
\end{enumerate}
\ignore{
The next tableau summarizes the total reward of an observer for two consecutive time steps $t,t+1$, given that $\tau_i$ denotes the current value in location $L_i$ (at time $t$), and $\tau_{-1}$ denotes the new value to enter location $L_0$ in the next step (at time $t+1$). The columns and the rows correspond to the position of the observer at time $t$ and $t+1$ respectively. 
\begin{equation}
\label{tab: sum two steps}
\begin{tabular}{c|c|c}
	& $L_0$ & $L_1$  \\
\hline
$L_0$ & $\tau_0+\tau_{-1}$		&	$\tau_1+\tau_{-1}$	\\
\hline
$L_1$ & 	$2\tau_0$	&	$\tau_1+\tau_0$	\\
\end{tabular}
\end{equation}
}
In the {\em online model} the observer is {\em not} aware of the sequence $v_1, v_2, \ldots, v_{n-1}, v_n$, rather she may only know some statistical information that requires constant memory. 
The limited memory implies a limited ability of keeping statistics or historical data, and, therefore, there is not much leeway for sophisticated processing policies. 
The (possible) movement of the observer can be determined exclusively by the current value she is observing and in particular not by the value of the location that the observer is not occupying. As a result, the only power an online algorithm has is to choose to observe a value twice in two consecutive time steps, if she thinks that this value provides high enough reward. In contrast, and in the {\em offline model}, the observer is aware of the entire sequence $v_1, v_2, \ldots, v_{n-1}, v_n$ in advance, and may choose to move between registers with no restrictions so as to maximize her total payoff. 

Our main goal is to design \watch\ strategies for the observer that maximize the total added value (or, equivalently, the {\em average} or {\em relative} added value or payoff). Our focus is to understand how the lack of information affects the performance of an oblivious online algorithm, compared to the optimal offline algorithm. The standard performance measurement that we use is the so-called {\em competitive ratio}, defined as the (worst case) ratio between the (expected - when the input stream is random) payoffs of an online and the optimal offline algorithm. It is immediate that for any input stream (even random), the competitive ratio of a fixed online algorithm is $ALG/OPT<1$, where $ALG, OPT$ are the (expected) payoff of the online and the optimal offline algorithm respectively.

\subsection{On \watch\ Strategies}
\label{sec: persistence strategies}

Given an input stream $v_1, v_2, \ldots, v_{n-1}, v_n$, the optimal solution for the offline model is straightforward; If the processor (observer) is in $L_0$, processing (observing) an item $i$ with value $v_i$, then it moves to $L_1$ only if the item that follows $i$ has a value smaller than $v_i$; If the processor is in $L_1$, processing an item $i$ with value $v_i$, then it moves to $L_0$ only if the item $i+2$ that will enter $L_0$ in the next round has a value $v_{i+2}$ greater than the value $v_{i+1}$ of the item $i+1$ currently in $L_0$. As a result, an offline and optimal observer may choose to always occupy the location (and subsequently obtain its value as a reward) that holds the maximum value that currently appears in the two locations $L_0,L_1$. Since at any step, an algorithm cannot have payoff more than the maximum value of the two registers, we conclude that
\begin{observation}
\label{obs: opt offline}
For input stream $v_1, v_2, \ldots, v_{n-1}, v_n$, and at each time step $t=2, \ldots,n$, the optimal solution of an offline algorithm incurs payoff equal to $\max\{v_t, v_{t-1}\}$. 
\end{observation}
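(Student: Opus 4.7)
The statement follows almost directly from the model, so the plan is just to formalize the two sides of the equality. The upper bound is immediate: at time step $t$ the registers $L_0$ and $L_1$ hold precisely $v_t$ and $v_{t-1}$, and the observer's payoff at that step is the value of whichever register it currently occupies; since both candidates are at most $\max\{v_t, v_{t-1}\}$, no algorithm (online or offline) can earn more than this at step $t$.

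For the matching lower bound I would argue achievability by explicit strategy. The key observation is that between any two consecutive time steps, the movement rule allows the observer to either stay or flip to the other register. Hence, from whichever location the observer occupies at the end of step $t-1$, it can reach either $L_0$ or $L_1$ at step $t$ by a single (possibly null) move. Consequently, the choice of location at each step is effectively independent of the choices made at previous steps, and the offline algorithm, having full knowledge of $v_1,\ldots,v_n$, can simply prescribe: at step $t\ge 2$, be at $L_0$ if $v_t \ge v_{t-1}$ and at $L_1$ otherwise. This is a well-defined feasible schedule that collects $\max\{v_t,v_{t-1}\}$ at every step $t\ge 2$.

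Combining the two directions yields the claimed equality. The only mild subtlety to be careful about is the synchronicity convention: the move happens at the same instant that $v_t$ enters $L_0$ and $v_{t-1}$ shifts into $L_1$, so when we say the observer may ``choose'' to be at $L_0$ or $L_1$ at time $t$, we mean after this simultaneous update. Under the model as stated, this is exactly the degree of freedom we need, so there is no real obstacle; the observation essentially records a consequence of the model rather than a nontrivial algorithmic fact, and the proof is a two-line argument.
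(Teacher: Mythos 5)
Your proof is correct and follows essentially the same reasoning as the paper: the upper bound is immediate from the fact that the two registers hold $v_t$ and $v_{t-1}$, and the matching strategy is exactly the "always occupy the register with the larger value" rule the paper describes (phrased there as a greedy move/stay decision). Your explicit remark that in a two-register buffer the position at each step can be chosen independently of the previous step is a worthwhile clarification that the paper leaves implicit.
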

We will invoke Observation~\ref{obs: opt offline} later, when we will derive closed formulas for the performance of the optimal offline algorithm when the values of the input stream come from certain distributions. 

Now we turn our attention to \watch\ strategies in the online model. Recall that any online algorithm is oblivious, non-adaptive and with limited memory. 
In particular, when at register $L_0$, an observer has the option to process the same value for one more time in the next step, or stay put at the register and watch in the next step the (currently unknown) value which will enter $L_0$. 
If the observer is at register $L_1$, then the possible payoff at the next step is unknown independently of the move of the observer. 
Hence it is natural to move the observer back to $L_0$, giving her the option (in the future) to observe favorable values more than once. This primitive idea gives rise to the following {\em threshold algorithms}, which are determined by a choice of threshold that dictates when a register value will be observed twice in case the observer is at register $L_0$. 
\begin{center}
\begin{tabular}{ll}
\hline
& {\bf Threshold Algorithm($T$)} \\
\hline
& {\bf Input:} a sequence of $n$ items with values $v_1, v_2, \ldots, v_{n-1}, v_n$ \\
1. &  When the processor has finished processing an item of value $\tau_0$ at $L_0$ then \\
     & 1a. {\bf if}  $\tau_0 \geq T$ {\bf then} move to $L_1$ \\
    & 1b. {\bf if} $\tau_0 < T$ {\bf then} stay at $L_0$ \\
2. & When the processor has finished processing an item at $L_1$ {\bf then} move to $L_0$ \\
\hline
\end{tabular}
\end{center} 
Our main contribution in subsequent sections is the (competitive) analysis of Threshold algorithms for various choices of thresholds. In what follows wee call the simplistic algorithm that doesn't move the processor from $L_0$ (or, equivalently, has a
threshold greater than $a_{k-1}$) {\em Naive}.

\subsection{General Input Streams}
\label{sec: comp analysis det}

In its most general version, the input stream to \watch\ is chosen by an unrestricted adversary.
Here we demonstrate that
the threshold algorithm cannot achieve a competitive ratio better than $1/2$. There are the following cases:
\begin{enumerate}
\item $a_{k-1}<T$ In this case, the processor stays always at $L_0$, and, therefore, acquires the payoff
for {\em each} item exactly {\em once}, for a total payoff of exactly $\sum_{i=1}^{n}v_i$. 
On the other hand, the optimal offline algorithm has the chance
of acquiring the payoff of the largest-value items at most {\em twice} (by processing them in both processors),
for a total payoff of, at most, $2\sum_{i=1}^{n}v_i$. Hence the competitive ratio is at least $1/2$.
\item $a_{k-2} \leq T < a_{k-1}$. In this case, the threshold algorithm always gets the payoff of an item with
value $a_{k-1}$ twice, and exactly once the values of the other items it processes in $L_0$ (obviously it
misses the items that follow immediately after the items of value $a_{k-1}$ processed in both $L_0$ and $L_1$.
It is clear that the optimal offline algorithm does the same. Therefore, the adversary will minimize this overlap between 
the threshold and optimal offline algorithms, by creating a sequence without value-$a_{k-1}$ items; this
is the same as the previous case, and the competitive ratio is at least $1/2$.
\item $T<a_{k-2}$ In this case, the adversary creates the sequence of items with values $a_{k-2}, a_{k-1}$, $a_{k-2},
a_{k-1},\ldots$. Then the relative (average) payoff for the threshold algorithm is $a_{k-2}$ (the algorithm
will always process the $a_{k-2}$ items twice, missing the more valuable $a_{k-1}$ items). The offline
optimal algorithm will behave exactly in the opposite way, achieving a relative payoff $a_{k-1}$. Hence
the competitive ratio is at most $a_{k-2}/a_{k-1}$, which can be made to be arbitrarily close to $0$.
\end{enumerate}
Therefore, the best threshold is any number greater than $a_{k-1}$, and the competitive ratio is at least $1/2$.
An upper bound of almost $1/2$ for the ratio is achieved by the input sequence with values $a_{k-2}, a_{k-1}, a_{k-2},
a_{k-1},\ldots$ and $a_{k-2}\ll a_{k-1}$.

\subsection{Competitive Analysis for Randomized Input Streams - A Summary of our Results}
\label{sec: ran model and summary of results}

The fact that, for arbitrary input streams, the Threshold algorithm cannot do better than the $1/2$ competitive ratio 
of the Naive algorithms, shows that in order for the threshold algorithm to perform better,
we need to restrict the input instances by making assumptions about the input stream. There are two 
assumptions that are common in online problems such as the secretary problem~\cite{freeman1983secretary},
or resource allocation problems~\cite{devanur2011}: One is the {\em IID} assumption, i.e., the value of each new
item is drawn independently and uniformly from the set $\{a_1,a_2,\ldots,a_{k-1}\}$. Another is the {\em random  
order} assumption, i.e., the input is a (uniformly) random permutation of $n$ items, each with its own distinct value.
In what follows, we study the threshold algorithm under these assumptions.

More formally, we study the following two random models of input streams $v_1, v_2, \ldots, v_{n-1}, v_n$:
\begin{itemize}
\item \textit{Random Permutations:} Input sequence stream is a random permutation of values $a_0\leq \ldots, \leq a_{n-1}$. \\
\item \textit{Independent and Identically Distributed Values:} Each $v_i$ assumes the value $a_j$ with probability $p_j$ independently at random, where $j=0, \leq k-1$ (note that we allow that $n=\omega(k)$).
\end{itemize} 

For both input families we assume that an online algorithm is oblivious, non-adaptive and with minimal memory, still we assume it has access in advance to some limited statistical information in order to determine a proper threshold. Our main technical contribution pertains to a detailed analysis of the performance of both the optimal offline and any Threshold online algorithm for any such random input. 
As a result we demonstrate that if the online algorithm knows the median of the set  from which the input stream elements assume values, then the competitive ratio improves significantly.
\begin{theorem}
\label{thm: comp 2/3 for all}
For any random permutation or uniform iid input stream, the online Threshold algorithm that uses as threshold the median of the values $\{a_i\}_{i}$ has (asymptotic) competitive ratio 2/3. 
\end{theorem}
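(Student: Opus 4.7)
My plan is to handle both models in a unified way by decomposing the algorithm's execution into \emph{phases}. Each phase begins with the processor at $L_0$ observing some item of value $v$: if $v<T$ the phase lasts one time step and yields payoff $v$, while if $v\ge T$ the phase lasts two time steps (the item is re-observed at $L_1$), yields payoff $2v$, and the item immediately following is skipped. Set $q=\mathbb{P}[v\ge T]$, $H=\mathbb{E}[v\mid v\ge T]$, $L=\mathbb{E}[v\mid v<T]$, and $\mu=\mathbb{E}[v]$; when $T$ is the median one has (asymptotically) $q=1/2$ and hence $L+H=2\mu$.

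In the uniform iid model each phase begins with a value distributed as $v$ independently of the past, so the expected number of steps per phase is $1+q=3/2$ and the expected payoff per phase is $(1-q)L+2qH$, giving per-step ALG payoff $(L+2H)/3$. For random permutations I would argue, by exchangeability within the ``high'' and ``low'' classes, that the skipped items form an essentially uniform random subset of the stream; consequently the fraction of high items among those still to be processed stays near $1/2$ throughout the execution. Writing $N_A$ and $N_B$ for the (random) numbers of type-$A$ and type-$B$ phases and noting that $N_A+2N_B=n$, the balance ``\#high items used $=n/2$'' gives $N_B+N_B/2=n/2$, so $N_A=N_B=n/3$ in expectation and the per-step ALG payoff equals $(L+2H)/3$ up to $o(1)$ again. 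For the offline optimum, Observation~\ref{obs: opt offline} gives per-step payoff $\mathbb{E}[\max(v_{t-1},v_t)]$, and splitting the pair by the region each value lies in and using $\max(X,Y)\le X+Y$ on each conditional block,
\[
\mathbb{E}[\max(X,Y)]\;\le\;q^2\cdot 2H + 2q(1-q)\,H + (1-q)^2\cdot 2L\;=\;\frac{L+2H}{2};
\]
the same bound holds in the permutation model since the block weights $\binom{n/2}{2}/\binom{n}{2}$ and $(n/2)^2/\binom{n}{2}$ tend to $1/4$ and $1/2$ as $n\to\infty$.

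Dividing, the ratio $\mathrm{ALG}/\mathrm{OPT}$ is at least $\frac{(L+2H)/3}{(L+2H)/2}=2/3$, as claimed. The main technical obstacle is the random-permutation bookkeeping: rigorously establishing that the remaining stream keeps a $1/2$ high/low split and that phase-start values are essentially uniformly distributed over the remaining items of each class; both follow from linearity of expectation and exchangeability, and I expect the closed-form expressions of Section~\ref{sec: performance perm} to record this through exact counts. Once that is in place, the key inequality above---nothing but $\mathbb{E}[\max(X,Y)\mid X,Y\in S]\le 2\,\mathbb{E}[X\mid X\in S]$ applied separately on $S=\{v\ge T\}$ and $S=\{v<T\}$---is the clean structural reason the $2/3$ ratio emerges.
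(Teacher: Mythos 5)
Your approach is genuinely different from the paper's. The paper derives \emph{exact} closed-form expressions for both $\mathrm{ALG}$ and $\mathrm{OPT}$ --- for permutations via the indicator variables $O_i$ and the binomial quantities $f^t_{n,k}$ (Lemmas~\ref{lem: exp obs given 2}, \ref{some conditional expectations}, \ref{lem: sum of f's}, \ref{lem: asymptotics of coefs - online perm}), for iid via a generating-function recurrence for $q_i=\Pr[Y_i=0]$ --- and then specializes to the median threshold. You instead use a phase/renewal decomposition: a low value costs one step and pays once, a high value costs two steps and pays twice, so per-step payoff is (expected payoff per phase)/(expected duration per phase) $=(L+2H)/3$, with no combinatorics. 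Your bound on $\mathrm{OPT}$ is also conceptually cleaner: conditioning $\max(X,Y)$ on which block each coordinate lands in and using $\max\le X+Y$ on the two same-block cases gives $(L+2H)/2$, which is exactly what the paper obtains via its closed formula and the inequality \eqref{bopt} (resp.\ \eqref{bopt iid}). For the uniform iid model your argument is rigorous as stated, since phases are genuinely iid and the renewal-reward identity applies. For random permutations, however, the step you flag --- that the skipped positions ``form an essentially uniform random subset'' so the high/low split among unseen items stays near $1/2$ --- is precisely the hard part: whether item $t+1$ is skipped is determined by the observed history, which is correlated with the composition of the remaining items, and the exact bookkeeping of this drift is what the recursion $f^t_{n,k}=1-\tfrac{k-1}{n-1}f^{t-1}_{n-1,k-1}$ and the asymptotics in Lemma~\ref{lem: asymptotics of coefs - online perm} are built to control. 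So the iid half of Theorem~\ref{thm: comp 2/3 for all} is fully established by your route; the permutation half is a correct and well-motivated roadmap but still needs that bookkeeping (or an equivalent concentration argument for adaptive sampling without replacement) to be a proof.
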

We emphasize that Theorem~\ref{thm: comp 2/3 for all} is the byproduct of analytic and closed formulas that we derive for optimal offline and Threshold online algorithms, when the input stream is a random permutation (Section~\ref{sec: permutations}) or iid (Section~\ref{sec: iid}), and not necessarily uniform. 

Next we ask whether it is possible for certain families of random inputs to achieve a competitive ratio better than 2/3, and given that the online algorithm has access to some statistical information. Again, we answer this in the positive by studying generic families of instances parametrized by the relative weight of their largest values. 

\begin{definition}[\cd\ input streams]
\label{def: c-dense}
Consider a random input stream (in either the random permutation or the iid model)  whose values are chosen from $\mathcal A=\{a_1, \ldots, a_t\}$, with $a_i\leq a_{i+1}$. The input stream is called \cd\ if the total weight of the largest $\lfloor ct \rfloor$ many values of $\mathcal A$, relative to the total weight of $\mathcal A$, is equal to $1- c$, i.e. when 
\begin{equation}
\label{equa: c-dense}
1-c = \frac{\sum_{i=t- \lfloor ct \rfloor+1}^t a_i}{\sum_{i=1}^t a_i}
\end{equation}
\end{definition}
Note that, although $c$ cannot be greater than $1/2$ by definition, when $c$ is 0 or 1, then the left hand-side of~\eqref{equa: c-dense} is 1 and 0 respectively, while the right hand-side is 0 and 1 respectively. At the same time, the two sides have different monotonicity as $c$ increases, and as such the notion of \cd\ input streams is well defined. Our main contribution pertaining to the families of random inputs which are asymptotically \cd, for some $c\in (0,1/2]$, is the following. 
\begin{theorem}
\label{thm: cd inputs combined}
For any random permutation or uniform iid \cd\ input stream, the online Threshold algorithm that uses as threshold the $\lfloor c\cdot n \rfloor$ largest value of $a_i$'s has (asymptotic) competitive ratio
$
\frac12 \frac{2-c}{(1-c)(1+c)^2}.
$
\end{theorem}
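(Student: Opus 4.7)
The plan is to combine the closed-form expressions for the optimal offline and Threshold online algorithm derived in Sections~\ref{sec: performance perm} and~\ref{sec: performance iid} with a simple analytic estimate that invokes the \cd\ structural assumption. The two models (random permutation and uniform iid) yield the same asymptotic expressions up to lower-order terms, so I will carry out the argument in parallel. For the Threshold algorithm with $T$ equal to the $\lfloor cn\rfloor$-th largest value, the marginal probability that the item at $L_0$ exceeds $T$ equals $c$, so the two-state Markov chain on the processor's location has stationary distribution $\pi(L_0)=1/(1+c)$ and $\pi(L_1)=c/(1+c)$. The expected payoff at $L_0$ is $\bar a := W/n$ and at $L_1$ it is the conditional mean of heavy items, which by the \cd\ property equals $(1-c)\bar a/c$; combining these gives an asymptotic per-step ALG payoff of $(2-c)\bar a/(1+c)$.

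For OPT, Observation~\ref{obs: opt offline} says the per-step offline payoff is $E[\max(v_t,v_{t-1})]$, which in the random-permutation model equals $\tfrac{2}{n(n-1)}\sum_j j\,a_j$ (for sorted $a_0\le\cdots\le a_{n-1}$), and has an analogous form in the iid case. The key step is the summation-by-parts identity
\[
\sum_{j=0}^{n-1} j\, a_j \;=\; \sum_{k=0}^{n-2} S_k,\qquad S_k := \sum_{j>k} a_j.
\]
The \cd\ hypothesis gives $S_{(1-c)n-1}=(1-c)W$, and the monotonicity $S_0\ge S_1\ge\cdots\ge S_{n-1}$ then yields $S_k\le W$ for $k<(1-c)n$ and $S_k\le(1-c)W$ for $k\ge(1-c)n$. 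Summing over the two regimes gives $\sum_k S_k\le (1-c)nW+cn(1-c)W=(1-c^2)nW$, and hence the asymptotic bound $E[\max(v_t,v_{t-1})]\le 2(1-c^2)\bar a$.

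Dividing the two estimates yields the claimed competitive ratio:
\[
\frac{\mathrm{ALG}}{\mathrm{OPT}}\;\ge\;\frac{(2-c)\bar a/(1+c)}{2(1-c^2)\bar a}\;=\;\frac{2-c}{2(1-c)(1+c)^2}.
\]
The main obstacle I anticipate is producing the OPT upper bound from the purely statistical \cd\ hypothesis: the naive estimate $E[\max]\le 2\bar a$ gives only the weaker ratio $(2-c)/[2(1+c)]$, so one must exploit \cd\ in a refined way, and the two-regime tail-sum bound above is precisely what produces the extra $(1+c)$ factor in the denominator. A secondary subtlety is justifying the stationary-distribution argument for ALG in the random-permutation model, where items are sampled without replacement; I would handle this by showing that the induced correlations contribute only $o(1)$ corrections as $n\to\infty$, so that the iid calculation transfers.
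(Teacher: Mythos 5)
Your proposal is essentially the paper's own proof, reorganized. The paper's argument (Sections~\ref{sec: comp analysis perm} and~\ref{sec: comp analysis iid}) is exactly: plug the closed-form expressions for ALG (Theorems~\ref{thm: online permutation input},~\ref{thm: online iid input}) and OPT (Theorems~\ref{thm: opt ran per},~\ref{thm:offiid}) into the ratio, upper-bound OPT via $\sum_i i\,a_i \le (n-\bar k)A^- + n A^+$, substitute the \cd\ relation $L_{\bar k}/A = 1-c$, and simplify. Your summation-by-parts identity $\sum_j j a_j = \sum_k S_k$ together with the two-regime tail bound is algebraically the same OPT estimate (before substituting $S_{(1-c)n-1}=(1-c)W$ you recover exactly $(n-\bar k)A^- + nA^+$), and your resulting bound $2(1-c^2)\bar a$ agrees with the paper's $\frac{2}{n}((1-c)A + cL_{\bar k})$ under \cd. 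Likewise, your ALG figure $\frac{(2-c)\bar a}{1+c}$ is exactly what the paper's formula $\frac{1}{1+c}\frac{A}{n} + \frac{c}{1+c}\frac{L_{\bar k}}{\bar k}$ reduces to under \cd.

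The one point worth flagging is your Markov-chain "stationary distribution" derivation of ALG. This is a clean heuristic that is immediate in the iid case (Theorem~\ref{thm: online iid input} is essentially this two-state chain), but for random permutations the paper does \emph{not} invoke a mixing or $o(1)$-correlation argument: it proves the exact finite-$n$ formula for $\Exp{}{O_i|V_t=a_i}$ via the recurrence $f^t_{n,k} = 1 - \frac{k-1}{n-1}f^{t-1}_{n-1,k-1}$ (Lemmas~\ref{lem: exp obs given 2},~\ref{some conditional expectations}) and then takes the limit through binomial identities (Lemmas~\ref{lem: sum of f's},~\ref{lem: asymptotics of coefs - online perm}). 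Your proposed route of showing the without-replacement correlations contribute only $o(1)$ is plausible but not what the paper does, and carrying it through rigorously would likely require comparable effort. Since you are in any case entitled to cite the closed-form Theorems~\ref{thm: online permutation input} and~\ref{thm: online iid input}, the cleaner write-up is simply to quote them and then perform the algebra, which is precisely the paper's route.
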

Clearly, when $c$ tends to 0, the performance of our Threshold algorithms is nearly optimal for \cd\ input streams. Most notably, the worst configuration for such an input is when $c=1/2$, inducing a competitive ratio equal to 2/3 (and as already predicted by Theorem~\ref{thm: comp 2/3 for all}). The proof of Theorem~\ref{thm: cd inputs combined} for random permutations and uniform iid inputs can be found in Sections~\ref{sec: permutations} and~\ref{sec: iid} respectively. Any omitted proofs can be found in the Appendix.

\section{Random Permutation Input Streams} 
\label{sec: permutations}

In this section we study the special case of inputs that are a random permutation of $n$ items with distinct  values $a_0 < a_1 < \cdots < a_{n-1}$ (with $n \geq 2$). First we find closed formulas for the performance of the optimal offline algorithm and any Threshold online algorithm for the \watch\ problem, and then we conclude with the competitive analysis. 

\subsection{Performance of Offline and Online Algorithms for Random Permutations}
\label{sec: performance perm}

Using Observation~\ref{obs: opt offline} we show in Appendix~\ref{sec: proof of thm: opt ran per} that 
\begin{theorem}  \label{thm: opt ran per}
The relative expected payoff (asymptotic payoff per time step) of the optimal offline algorithm when the input is a random permutation is:
$
\frac 1{\binom{n}{2}} \sum_{i=1}^{n-1} i \cdot  a_i.
$
\end{theorem}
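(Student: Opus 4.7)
The plan is to combine the per-step characterization of the offline optimum from Observation~\ref{obs: opt offline} with linearity of expectation and a direct count over unordered pairs of values.

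First I would write the total expected offline payoff as a sum of per-step expectations. By Observation~\ref{obs: opt offline}, the offline optimum derives $v_1$ at $t=1$ and $\max\{v_t,v_{t-1}\}$ at each subsequent step $t\in\{2,\ldots,n\}$. Hence by linearity of expectation,
\[
\Exp{}{\mathrm{OPT}} \;=\; \Exp{}{v_1} \;+\; \sum_{t=2}^{n}\Exp{}{\max\{v_t,v_{t-1}\}}.
\]
Since the expected value of any single entry of a random permutation of distinct values $a_0,\ldots,a_{n-1}$ is a bounded constant (the average of the $a_i$'s), the $v_1$ contribution is an additive lower-order term that vanishes after dividing by $n$, and can be absorbed into the asymptotic (relative) payoff expression.

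Next I would compute $\Exp{}{\max\{v_t,v_{t-1}\}}$ for a fixed $t$. The key observation is that in a uniformly random permutation of $n$ distinct elements, the marginal distribution of the unordered pair $\{v_{t-1},v_t\}$ is uniform over all $\binom{n}{2}$ two-element subsets of $\{a_0,\ldots,a_{n-1}\}$. For any subset $\{a_i,a_j\}$ with $i<j$ the maximum is $a_j$, and the number of subsets in which $a_j$ plays the role of the maximum is exactly $j$ (one choice per smaller index $i\in\{0,\ldots,j-1\}$). Summing up,
\[
\Exp{}{\max\{v_t,v_{t-1}\}} \;=\; \frac{1}{\binom{n}{2}} \sum_{0\le i<j\le n-1} a_j \;=\; \frac{1}{\binom{n}{2}}\sum_{j=1}^{n-1} j\cdot a_j,
\]
which crucially is independent of $t$.

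Finally I would conclude by averaging over the $n-1$ steps $t=2,\ldots,n$ (or equivalently taking the asymptotic payoff per time step, which sends the $v_1$ boundary term and the $(n-1)/n$ factor to the limit $1$): the total expected payoff equals $(n-1)\cdot \frac{1}{\binom{n}{2}}\sum_{j=1}^{n-1} j\,a_j$ up to the $v_1$ additive term, and dividing by $n-1$ (the number of time steps contributing the $\max$ value) yields the claimed formula $\frac{1}{\binom{n}{2}}\sum_{i=1}^{n-1} i\cdot a_i$. The main (and only) subtle point is justifying that the pair $(v_{t-1},v_t)$ is marginally uniform over unordered two-subsets; this follows from the fact that the number of permutations of an $n$-set that place a prescribed unordered pair in positions $t-1,t$ is the same for every such pair (namely $2\cdot(n-2)!$). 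Everything else is routine bookkeeping, so I anticipate no real obstacle beyond recording this symmetry argument cleanly.
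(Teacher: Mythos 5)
Your proposal is correct and follows essentially the same route as the paper: invoke Observation~\ref{obs: opt offline}, use linearity of expectation, observe that in a random permutation the pair occupying $(L_0,L_1)$ at any interior time step is uniform over pairs of distinct values, and then count how often each $a_j$ is the maximum. The only cosmetic difference is that you sum directly over unordered $2$-subsets while the paper sums over ordered pairs with probability $\frac{1}{n(n-1)}$ each and then simplifies; both yield $\frac{1}{\binom{n}{2}}\sum_{i=1}^{n-1} i\,a_i$ per step, and both dismiss the boundary term(s) as asymptotically negligible.
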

The main technical contribution of this section is the performance analysis of any Threshold online algorithm. 
\begin{theorem}   \label{thm: online permutation input}
Let $k=k(n)$ be such that $\lim_{n \rightarrow \infty} \frac kn = c \in \Theta(1)$. Let also $A^-, A^+$ denote the summation 
of the smallest $n-k$ and largest $k$ values respectively. Then the relative expected payoff of the Threshold algorithm 
(payoff per time step) when the threshold is $T:=a_{n-k}$ is:
$
\frac{A^-}{1+c}
+
\frac{2A^+}{1+c}.
$
\end{theorem}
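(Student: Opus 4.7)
The plan is to decouple the algorithm's trajectory from the specific values by working with the induced high/low pattern, and then average via exchangeability.

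First, observe that the Threshold algorithm with threshold $T = a_{n-k}$ only reads the indicator $b_j := \mathbf{1}[v_j \geq T]$; its entire trajectory is thus determined by the 0/1 pattern $b = (b_1, \ldots, b_n)$, which under a random permutation is uniform over binary strings with exactly $k$ ones. The algorithm's sequence of event positions satisfies $1 = j_1 < j_2 < \cdots$ with $j_{s+1} = j_s + 1 + b_{j_s}$, so each low event opens a block of size $1$ while each high event opens a block of size $2$ (its second position being observed from $L_1$ and then missed at $L_0$). Writing $L(b), H(b)$ for the number of low and high events, these blocks cover $\{1,\ldots,n\}$ up to an $O(1)$ boundary correction, so $L(b) + 2H(b) = n + O(1)$. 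The total payoff decomposes as the sum of values at low-event positions plus twice the sum of values at high-event positions.

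Next I would invoke exchangeability. Conditional on $b$, the values at the $n-k$ positions with $b_j = 0$ form a uniform random assignment of the bottom $n-k$ values (summing to $A^-$), and similarly at the $k$ high positions (summing to $A^+$). Hence $\mathbb{E}[v_j \mid b, b_j = 0] = A^-/(n-k)$ and $\mathbb{E}[v_j \mid b, b_j = 1] = A^+/k$, giving
\begin{equation*}
\mathbb{E}[\text{Payoff} \mid b] \;=\; L(b) \cdot \frac{A^-}{n-k} \;+\; 2H(b) \cdot \frac{A^+}{k}.
\end{equation*}

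The heart of the proof is to show asymptotically $\mathbb{E}[L]/(n-k) \to 1/(1+c)$ and $\mathbb{E}[H]/k \to 1/(1+c)$. Combined with the block identity $\mathbb{E}[L] + 2\mathbb{E}[H] = n + O(1)$, this reduces to proving an overall event density of $1/(1+c)$ together with the claim that the fraction of events carrying a high value equals $c$. Letting $E_i$ be the indicator that position $i$ is an event, I would set up the recurrence $p_i := \Pr[E_i = 1] = \Pr[E_{i-1} = 1, b_{i-1} = 0] + \Pr[E_{i-2} = 1, b_{i-2} = 1]$ (the two cases are disjoint, because two consecutive events force the earlier position to carry a $0$). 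In the iid setting this is $p_i = (1-c) p_{i-1} + c p_{i-2}$, with contracting fixed point $1/(1+c)$; for random permutations the conditional probabilities $\Pr[b_i = 1 \mid b_1,\ldots,b_{i-1}] = (k - \sum_{j<i} b_j)/(n-i+1)$ perturb the recurrence, but the partial sums concentrate as $\sum_{j<i} b_j = c(i-1) + O(\sqrt{n})$, so the per-step perturbation is $o(1)$ and the asymptotic fixed point is preserved; summing yields $\mathbb{E}[L+H] \sim n/(1+c)$. An analogous exchangeability step gives $\mathbb{E}[H] \sim k/(1+c)$ and hence $\mathbb{E}[L] \sim (n-k)/(1+c)$. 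The main obstacle is the rigorous handling of these permutation-vs-iid dependencies, which I would also address by coupling the random-permutation prefix with an iid $\operatorname{Bernoulli}(c)$ process (whose total-variation distance from the permutation's prefix is $O(1/\sqrt{n})$). Substituting the asymptotics into the conditional expectation, the $(n-k)$ and $k$ factors cancel and we obtain $\mathbb{E}[\text{Payoff}] = \frac{A^-}{1+c} + \frac{2A^+}{1+c}$, as claimed.
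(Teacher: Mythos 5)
Your proof is correct in outline and takes a genuinely different route from the paper's. The paper works value-by-value: it defines $f^t_{n,k}=\mathbb{E}[O_i\mid V_t=a_i]$ (the observation probability of a high item at position $t$), derives the exact finite-$n$ recurrence $f^t_{n,k}=1-\frac{k-1}{n-1}f^{t-1}_{n-1,k-1}$, solves it to a closed alternating binomial sum, and then extracts the $\tfrac{1}{1+c}$ limit through two technical identities (Lemmas in Appendices A.2--A.3). You instead decouple the trajectory from the realized values: the algorithm sees only the high/low pattern $b$, so conditioning on $b$ and using exchangeability of the two blocks of values gives $\mathbb{E}[\text{Payoff}\mid b]=L(b)\tfrac{A^-}{n-k}+2H(b)\tfrac{A^+}{k}$, and the whole computation collapses to the single asymptotic claim $\mathbb{E}[L+H]\sim n/(1+c)$ (the block identity $L+2H=n+O(1)$ then gives $\mathbb{E}[H]\sim k/(1+c)$ and $\mathbb{E}[L]\sim(n-k)/(1+c)$ for free, so your ``analogous exchangeability step'' is not even needed). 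You get the density from the two-step recurrence on $p_i=\Pr[E_i=1]$, whose iid form $p_i=(1-c)p_{i-1}+cp_{i-2}$ with $p_1=1$ has solution $p_i=\frac{1-(-c)^i}{1+c}\to\frac{1}{1+c}$ (this is in fact precisely the recurrence the paper solves in its \emph{iid} section, Lemma~\ref{ref: rec solve for iid}, so you are importing that mechanism into the permutation setting). Your structural route is more transparent about where $1/(1+c)$ comes from; the paper's route gives exact finite-$n$ expressions. One remark: the theorem's ``payoff per time step'' phrasing is a slip in the paper (the formula as written is the total asymptotic payoff, as the paper's own equation~\eqref{equa: perf online ran per equiv} confirms), and your derivation correctly produces the formula as stated.

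The one place where you stay at sketch level is the permutation-to-iid transfer, and I would push back a bit on the mechanisms you propose. The per-step conditional $\Pr[b_i=1\mid b_{<i}]=(k-\sum_{j<i}b_j)/(n-i+1)$ is \emph{not} uniformly $c+o(1)$: for $i$ within $O(\sqrt n)$ of $n$ the denominator is small and the perturbation can be $\Theta(1)$, so the ``per-step perturbation is $o(1)$'' claim fails at the tail. Likewise, the total-variation distance between the full weight-$k$ permutation indicator sequence and an iid $\mathrm{Bernoulli}(c)$ sequence of length $n$ is $\Theta(1)$, not $O(1/\sqrt n)$; the $O(m/n)$-type bounds hold only for prefixes $m=o(n)$, whereas your events live throughout $[n]$. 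Both gaps are repairable: the at most $O(n^{2/3})$ tail positions contribute $o(n)$ to $\sum p_i$ in aggregate since each $p_i\le 1$, and hypergeometric concentration handles the bulk positions; and the coupling should be applied position-by-position inside the recurrence rather than to the whole sequence. But these points need to be argued, not asserted, and the coupling claim as written is wrong. You flag this as the main obstacle, which is the right instinct; the paper's exact binomial computation is precisely what sidesteps it.
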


The remaining of the section is devoted in proving Theorem~\ref{thm: online permutation input}. We will need the following random variables:
$A_i$ denotes the profit of our algorithm from value $a_i$, or in other words, the contribution of $a_i$ to the performance of the algorithm. Clearly, if $a_i \geq T$ then $A_i \in \{0,2a_i\}$, and if $a_i < T$ then $A_i \in \{0,a_i\}$. 
$V_t \in \{a_i\}_{i=0,\ldots,n-1}$ is the value that appears in position $t$ of the (random) permutation, where position 1 
is the value that will be read first ($t=1,\ldots,n$).
Finally, $O_i$ is the indicator random variable that equals 1 iff value $a_i$ is observed.

Since all values $a_i$ will appear in every permutation, we have that
\begin{equation}\label{equa: payoff online random perm}
\mbox{Expected Payoff} = 
\Exp{}{\sum_{i=0}^{n-1} A_i}
=
\sum_{i=0}^{n-1} \Exp{}{ A_i}.
\end{equation}
The contribution of each $a_i$ clearly depends on whether the value is observed. This motivates the following lemmata.

\begin{lemma}\label{lem: exp obs given 2}
For every $a_{j_0}\geq a$ and for every $a_i$ ($i \not = j_0$) we have 
\begin{equation}\label{equa: oi using complement}
\Exp{}{O_i|V_t=a_i~\&~V_{t-1}\geq T}= 1 - \Exp{}{O_j|V_t=a_i~\&~V_{t-1}= a_{j_0}}.
\end{equation}
\end{lemma}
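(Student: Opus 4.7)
The plan is to establish, first, a pointwise identity $O_i + O_{j_0} = 1$ on the conditioning event, and then upgrade the conditioning on the left-hand side via a symmetry/exchangeability argument. I read the lemma as having two small typos ($a \to T$ and $O_j \to O_{j_0}$), since this is the identity that is both true and useful for summing the expected payoff in \eqref{equa: payoff online random perm}.

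The pointwise step is a direct case analysis of the observer's location at time $t-1$ under the Threshold algorithm, given $V_{t-1} = a_{j_0} \geq T$ and $V_t = a_i$. If the observer is at $L_0$ at time $t-1$, it processes $V_{t-1} = a_{j_0}$; since $a_{j_0} \geq T$, rule 1a forces it to $L_1$ at time $t$, where it re-processes $a_{j_0}$, so $O_{j_0} = 1$. In this case the observer sits at $L_1$ while $V_t = a_i$ sits at $L_0$, and at time $t+1$ rule 2 sends it back to $L_0$, by which point $a_i$ has already exited, so $O_i = 0$. Conversely, if the observer is at $L_1$ at time $t-1$, it never sees $a_{j_0}$ (which is at $L_0$) and, by rule 2, is at $L_0$ at time $t$, where it observes $a_i$; hence $O_{j_0} = 0$ and $O_i = 1$. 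In both sub-cases $O_i + O_{j_0} = 1$ as random variables on the conditioning event, so
\[
\Exp{}{O_i \mid V_t = a_i,\, V_{t-1} = a_{j_0}} \;=\; 1 - \Exp{}{O_{j_0} \mid V_t = a_i,\, V_{t-1} = a_{j_0}}.
\]

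The second step upgrades the left-hand conditioning from a specific $a_{j_0}$ to the generic event $\{V_{t-1} \geq T\}$. The key observation is that the observer's trajectory up through time $t-1$ depends on the values $V_1,\ldots,V_{t-1}$ only through the thresholded sequence $(\mathbf{1}[V_s \geq T])_{s\leq t-1}$. In a uniformly random permutation, conditioning on $V_t = a_i$ and on the identity of which value occupies position $t-1$ among $\{a_{j} : a_j \geq T,\, j \neq i\}$ leaves the distribution over all thresholded prefixes invariant: any two above-threshold values (other than $a_i$) are exchangeable in the positions $1,\ldots,t-1$. Consequently $\Exp{}{O_i \mid V_t = a_i,\, V_{t-1} = a_{j_0}}$ is the same for every $a_{j_0} \geq T$, and equals $\Exp{}{O_i \mid V_t = a_i,\, V_{t-1} \geq T}$ by the law of total expectation. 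Combining with the pointwise identity above yields the statement.

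The only delicate part is the exchangeability argument; it is intuitively clear, but a careful write-up must point out that the Threshold algorithm's internal state at any time is a deterministic function of thresholded history alone, so relabelling two above-threshold values in positions $\{1,\ldots,t-1\}$ leaves $O_i$ unchanged on every realization. Everything else is a mechanical case split on the rules of the algorithm.
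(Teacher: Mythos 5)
Your proof is correct and follows essentially the same route as the paper: both proofs rest on (a) the pointwise complement identity $O_i + O_{j_0} = 1$ on $\{V_t = a_i,\, V_{t-1} = a_{j_0}\geq T\}$, and (b) the observation that $\Exp{}{O_i \mid V_t = a_i,\, V_{t-1} = a_j}$ does not depend on which above-threshold $a_j$ sits in position $t-1$, so the conditioning can be replaced by $\{V_{t-1}\geq T\}$ via the law of total expectation. You simply apply the two steps in the reverse order and flesh out the exchangeability justification (that the observer's trajectory depends on the prefix only through the indicators $\mathbf{1}[V_s\geq T]$) which the paper asserts tersely as ``the penult expectations are all the same''; you also correctly read $a\to T$ and $O_j\to O_{j_0}$ as typos in the statement.
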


\begin{proof}
If $a_i\geq T$, and for any fixed $a_{j_0}$, $j_0 \not= i$, we have
\begin{align*}
& \Exp{}{O_i|V_t=a_i~\&~V_{t-1}\geq T}\\
=&
\sum_{j:~a_j\geq T, j\not = i}
\pr{}{V_{t-1}=a_j |V_t=a_i~\&~V_{t-1}\geq T}
~
\Exp{}{O_i|V_t=a_i~\&~V_{t-1}= a_j} \\
=&
\frac{1}{k-1}
\sum_{j:~a_j\geq T, j\not = i}
\Exp{}{O_i|V_t=a_i~\&~V_{t-1}= a_j} 
=\Exp{}{O_i|V_t=a_i~\&~V_{t-1}= a_{j_0}}
\end{align*}
where the last equality is due to the fact that the penult expectations are all the same for all $j$ in the range of the summation. 
From the description of the threshold algorithm, and given that $V_t=a_i$ and $V_{t-1}= a_j \geq T$, we have that $O_i=1$ exactly when $O_j=0$. Therefore, we see that 
$
\Exp{}{O_i|V_t=a_i~\&~V_{t-1}\geq T}= 1 - \Exp{}{O_j|V_t=a_i~\&~V_{t-1}= a_{j_0}}
$
as we promised in~\eqref{equa: oi using complement}. The proof for $a_i<T$ is almost identical.
\qed \end{proof}

We can now compute the expected value of $O_i$ given that $a_i$ has a certain position in the permutation. 

\begin{lemma}\label{some conditional expectations}
\begin{equation}\label{equa: cond observation new}
\Exp{}{O_i|V_t=a_i}=
\left\{
\begin{array}{lll}
1-\frac{k}{n-1} f^{t-1}_{n-1,k}
& \mbox{, if }~a_i<T \\
f^t_{n,k} & \mbox{, if }~a_i\geq T \\
\end{array}
\right.
\end{equation}
where
$$
f^t_{n,k} = \frac1{\binom{n-1}{k-1}}\sum_{s=0}^{\min\{t,k\}-1} (-1)^s \binom{n-1-s}{k-1-s}.$$
\end{lemma}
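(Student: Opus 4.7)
My approach is to reduce both piecewise formulas to a single combinatorial recurrence that I then verify by induction on $t$. First, I would identify $O_i$ with the indicator $X_t$ of the event that the processor is at $L_0$ at time $t$: under the Threshold algorithm, if $X_t = 1$ the processor observes $V_t$ (once if $V_t < T$, twice if $V_t \geq T$), while if $X_t = 0$ the processor is at $L_1$ observing $V_{t-1}$ and misses $V_t$ entirely. The algorithm's transitions also yield the key property $X_t = 0 \iff X_{t-1} = 1$ and $V_{t-1} \geq T$, which is the pathwise content underlying Lemma~\ref{lem: exp obs given 2}.

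Conditioning on $V_{t-1}$, I would then write
\[
\Pr[X_t = 0 \mid V_t = a_i] = \Pr[V_{t-1} \geq T \mid V_t = a_i] \cdot \Pr[X_{t-1} = 1 \mid V_t = a_i, V_{t-1} \geq T].
\]
The first factor equals $\tfrac{k-1}{n-1}$ if $a_i \geq T$ and $\tfrac{k}{n-1}$ if $a_i < T$, by the uniform structure of a random permutation. For the second factor, the symmetry already exploited in Lemma~\ref{lem: exp obs given 2} shows it equals $\Pr[X_{t-1} = 1 \mid V_t = a_i, V_{t-1} = a_{j_0}]$ for any fixed high value $a_{j_0}$. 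Since $X_{t-1}$ depends only on the first $t-2$ positions, a brief marginalization argument identifies this with the analogous probability in the reduced random permutation of size $n-1$ obtained by deleting $a_i$ from the value pool; the reduced problem has $k-1$ highs if $a_i \geq T$ and $k$ highs if $a_i < T$, so the inductive hypothesis yields $f^{t-1}_{n-1,k-1}$ or $f^{t-1}_{n-1,k}$ respectively. The low-value formula follows immediately, while the high-value case reduces to the identity $f^t_{n,k} = 1 - \tfrac{k-1}{n-1} f^{t-1}_{n-1,k-1}$.

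All that remains is a purely algebraic verification of this recurrence together with the base case. The base case $t=1$ holds trivially, since $X_1 = 1$ deterministically and $f^1_{n,k} = \binom{n-1}{k-1}/\binom{n-1}{k-1} = 1$. For the inductive step, I would use the identity $\tfrac{k-1}{(n-1)\binom{n-2}{k-2}} = \tfrac{1}{\binom{n-1}{k-1}}$ to absorb the prefactor, then reindex the alternating binomial sum in $f^{t-1}_{n-1,k-1}$ via $s \mapsto s+1$; the result is $-\tfrac{1}{\binom{n-1}{k-1}}$ times the $s \geq 1$ tail of the sum defining $f^t_{n,k}$, whose absent $s=0$ term equals exactly $1$, closing the recursion. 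I expect the main obstacle to be the bookkeeping of the reduction to the smaller permutation problem --- correctly tracking how $(n,k)$ shifts to $(n-1,k-1)$ or $(n-1,k)$ depending on whether $a_i$ is above or below threshold, and justifying the marginalization step --- while the algebraic verification itself is routine.
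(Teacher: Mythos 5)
Your proof is correct and follows essentially the same route as the paper's: condition on whether $V_{t-1}$ is above or below threshold, invoke the algorithm's transition rule (the pathwise content of Lemma~\ref{lem: exp obs given 2}) to flip the indicator, marginalize away the conditioning on $V_t=a_i$ to reduce to the $(n-1,k-1)$ or $(n-1,k)$ subproblem, and close with the recurrence $f^t_{n,k}=1-\tfrac{k-1}{n-1}f^{t-1}_{n-1,k-1}$. The only difference is cosmetic: you work directly with the processor-location indicator $X_t$ and spell out the binomial reindexing that the paper compresses into ``the claim follows.''
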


\begin{proof}
From the behaviour of the threshold algorithm, it is immediate that $\Exp{}{O_i|V_t=a_i}$ depends only on whether $a_i\geq T$ or not. First we observe that 
\begin{equation}\label{equa: exp O given ai as a sum}
\Exp{}{O_i|V_t=a_i}= 
\left(
\begin{array}{rl}
& \pr{}{V_{t-1}<T|V_t=a_i}~\Exp{}{O_i|V_t=a_i~\&~V_{t-1}<T} \\
+&\pr{}{V_{t-1}\geq T|V_t=a_i}~\Exp{}{O_i|V_t=a_i~\&~V_{t-1}\geq T}
\end{array}
\right)
\end{equation}
where
$$
\pr{}{V_{t-1}\geq T|V_t=a_i}
=
\left\{
\begin{array}{lll}
\frac{k}{n-1} & \mbox{, if }~a_i<T \\
\frac{k-1}{n-1} & \mbox{, if }~a_i\geq T \\
\end{array}
\right. .
$$
The next observation is that $\Exp{}{O_i|V_t=a_i~\&~V_{t-1}<T}=1$. Indeed, if $V_{t-1}=a_j<T$, then $a_j$ is either observed or not. If it is observed, then this happens only in $L_0$, so the observer will also observe the next coming value which is $a_i$. If on the other hand $a_j$ is not observed, then necessarily the next coming value is observed. Hence, expression~\eqref{equa: exp O given ai as a sum} simplifies to 
\begin{equation}\label{equa: cond observation}
\Exp{}{O_i|V_t=a_i}=
\left\{
\begin{array}{lll}
1-\frac{k}{n-1} + \frac{k}{n-1} \Exp{}{O_i|V_t=a_i~\&~V_{t-1}\geq T} 
& \mbox{, if }~a_i<T \\
1-\frac{k-1}{n-1} + \frac{k-1}{n-1} \Exp{}{O_i|V_t=a_i~\&~V_{t-1}\geq T} & \mbox{, if }~a_i\geq T \\
\end{array}
\right. .
\end{equation}

We are now ready to justify~\eqref{equa: cond observation new} examining the two cases. 
\begin{description}
\item[Case $a_i \geq T$:] 
For every $a_i\geq T$, we set 
$
\Exp{}{O_i|V_t=a_i} := f^t_{n,k},
$
since the value is independent of $a_i$, but it is depended on the number of available values $n$, the position $t$, as well as the number of values $k$ not less than $T$. Then we observe that $\eqref{equa: oi using complement}$ of Lemma~\ref{lem: exp obs given 2} can be written as 
$\Exp{}{O_i|V_t=a_i~\&~V_{t-1}\geq T}= 1 - f^{t-1}_{n-1,k-1}.$
Continuing from~\eqref{equa: cond observation}, we see then that 
$
f^t_{n,k} = 1 - \frac{k-1}{n-1} f^{t-1}_{n-1,k-1}.
$
Given that for all $t,k\geq 1$ we have that $f^t_{n,k}=1$ whenever $t=1$ or $k=1$ the claim follows. 
\item[Case $a_i < T$:] Similar to the previous case we write $\eqref{equa: oi using complement}$ as 
$\Exp{}{O_i|V_t=a_i~\&~V_{t-1}\geq T}= 1 - f^{t-1}_{n-1,k}$
(note that in this case, and since $a_i<T$ we still have $k$ many values at least $T$ to choose from). Hence, \eqref{equa: cond observation} becomes 
$\Exp{}{O_i|V_t=a_i} = 1-\frac k{n-1} f^{t-1}_{n-1,k},$
again as promised.
\end{description} 
\qed \end{proof}

It is clear from the previous lemmata that the formulas of the payoff of online Threshold algorithms involves numerous binomial expressions, which we simplify in Appendices~\ref{sec: binomial identity lemma 1},\ref{sec: binomial identity lemma 2}. Given this quite technical work, we are ready to prove Theorem~\ref{thm: online permutation input}.
\begin{proof}[of Theorem~\ref{thm: online permutation input}]
Let $a$ denote some threshold value, such that $n-k$ many $a_i's$ are less than $a$. For every $i=0,\ldots,n-1$ we have 
\begin{equation}\label{equa: exp Ai as a sum}
\Exp{}{ A_i} = \sum_{t=1}^n \pr{}{V_t=a_i}~\Exp{}{A_i|V_t=a_i}=\frac1n \sum_{t=1}^n \Exp{}{A_i|V_t=a_i}.
\end{equation}
Using the random variables $O_i$ that indicate whether $a_i$ is observed, we have
$$
\Exp{}{A_i|V_t=a_i}
=
\left\{
\begin{array}{lll}
a_i ~\Exp{}{O_i|V_t=a_i} & \mbox{, if }~a_i<a \\
2a_i ~\Exp{}{O_i|V_t=a_i} & \mbox{, if }~a_i\geq a \\
\end{array}
\right.
$$
whose values are given by Lemma~\ref{some conditional expectations}. 
Hence, by~\eqref{equa: payoff online random perm} and \eqref{equa: exp Ai as a sum}, we have that 
\begin{align}
\mbox{Expected Payoff} 
&= 
\sum_{i=0}^{n-k-1} 
\frac1n \sum_{t=1}^n \Exp{}{A_i|V_t=a_i}
+
\sum_{i=n-k}^{n-1} 
\frac1n \sum_{t=1}^n \Exp{}{A_i|V_t=a_i} 
\notag
\\
&
=
\left(
1 - \sum_{s=0}^{k-1} (-1)^s 
\frac{\binom{n-1-s}{k-1-s}}{\binom{n}{k}}
\right) A^- +
2\left(
\sum_{s=0}^{k-1} (-1)^s 
\frac{\binom{n-s}{k-1-s}}{\binom{n}{k-1}}
\right) A^+. 
\tag{From Lemma~\ref{lem: sum of f's} in 
Appendix~\ref{sec: binomial identity lemma 1} }
\end{align}

The relative performance is obtained by dividing by $n$. According to technical Lemma~\ref{lem: asymptotics of coefs - online perm} in Appendix~\ref{sec: binomial identity lemma 2}, and given that  $\frac kn \rightarrow c$, the theorem follows.   
\qed \end{proof}

\subsection{Competitive Analysis for Random Permutations}
\label{sec: comp analysis perm}

We can now prove Theorems~\ref{thm: comp 2/3 for all} and \ref{thm: cd inputs combined} pertaining to random permutations. Suppose that the Threshold algorithm chooses threshold value $T$ equal to the $\bar k$ largest element of the value $a_i$. Denote by $A$ the sum of all values $a_i$, and by $L_{\bar k}$ the sum of the $\bar k$ largest values of them. Abbreviate also $\bar k/n$ by $c$. 

Theorem \ref{thm: online permutation input} applies with $T:=a_{n-\bar k}$, to give
(asymptotically) that \begin{equation}
\label{equa: perf online ran per equiv}
ALG=\frac1{1+c}\frac1nA+\frac c{1+c}\frac1{\bar k} L_{\bar k}.
\end{equation}
At the same time, Theorem \ref{thm: opt ran per} implies that for the optimal offline algorithm we have
\begin{equation}  \label{bopt}
OPT = \frac 1{\binom{n}{2}} \sum_{i=0}^{n-1} i \cdot  a_i
\leq 
\frac 1{\binom{n}{2}} 
\left(
(n-k) \sum_{i=0}^{n-\bar k-1}  a_i 
+ 
n \sum_{i=n-\bar k}^{n-1}  a_i 
\right)
=
\frac2n\left( (1-c)A+cL_{\bar k}\right). 
\end{equation}

\begin{proof}[of Theorem~\ref{thm: comp 2/3 for all} for Random Permutations]
When the Threshold value is the median, we have that $c=1/2$. Using the bounds~\eqref{equa: perf online ran per equiv} and~\eqref{bopt} it is straightfoward that $ALG, OPT$ are indeed within 2/3 of each other as promised. 
\qed \end{proof}

\begin{proof}[of Theorem~\ref{thm: cd inputs combined} for Random Permutations]
When the input is a \cd\ stream, the Threshold algorithm can choose $\bar k$ satisfying $1-\bar k/n = 1-c = L_{\bar k}/A$. But then the competitive ratio becomes
$$
\frac
{
ALG
}
{
OPT
}
\stackrel{\eqref{equa: perf online ran per equiv},\eqref{bopt}}
\geq 
\frac12 \cdot \frac1{1+c} \cdot 
\frac
{
1+\frac{L_{\bar k}}A
}
{(1-c)+c \frac{L_{\bar k}} A}
=
\frac12 \cdot \frac{2-c}{(1-c)(1+c)^2}.
$$ 
\qed \end{proof}

\section{Random iid-Valued Input Streams}
\label{sec: iid}

In this section we study the special case of inputs streams whose elements are iid valued. As per the description of the model in Section~\ref{sec: ran model and summary of results}, we assume that the value $v_i$ of the $i$-th input item of the stream is an independent random variable assuming a value $a_0 < a_1 < \cdots < a_{k-1}$ (with $k \geq 2$) with probability $p_0, p_1,\ldots, p_{k-1}$ respectively (i.e., $\Pr [v_i= a_j] = p_j$). 

\subsection{Performance of Offline and Online Algorithms for iid-Valued Streams}
\label{sec: performance iid}

Using Observation~\ref{obs: opt offline}, we compute in Appendix~\ref{sec: proof of off iid app} the asymptotic payoff of the optimal offline algorithm.
\begin{theorem}  \label{thm:offiid}
The relative expected payoff (asymptotic payoff per time step) of the optimal offline algorithm when the input is a random i.i.d. sequence is
$
\sum_{i=0}^{k-1} p_i a_i
+
\sum_{i=0}^{k-1}\sum_{j=i+1}^{k-1}p_ip_j(a_j-a_i).
$
\end{theorem}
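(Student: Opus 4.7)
My plan is to apply Observation \ref{obs: opt offline} directly, reducing the computation of the expected payoff to the evaluation of $\mathbb{E}[\max\{X,Y\}]$ for two iid copies $X,Y$ drawn from the distribution $\Pr[X = a_j] = p_j$. By the observation, the total payoff of the optimal offline algorithm on an input $v_1,\ldots,v_n$ equals $v_1 + \sum_{t=2}^{n}\max\{v_t,v_{t-1}\}$. Taking expectations, and exploiting the fact that $(v_t,v_{t-1})$ are iid-distributed copies of $(X,Y)$ for every $t \geq 2$, linearity of expectation gives
\begin{equation*}
\Exp{}{\text{total payoff}} = \Exp{}{v_1} + (n-1)\,\Exp{}{\max\{X,Y\}}.
\end{equation*}
Dividing by $n$ and letting $n \to \infty$, the $\Exp{}{v_1}/n$ term vanishes and the $(n-1)/n$ factor tends to $1$, so the asymptotic relative payoff per time step equals $\Exp{}{\max\{X,Y\}}$.

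The bulk of the proof is then the evaluation of this expectation in the form claimed by the theorem. My preferred route is to write $\max\{X,Y\} = X + (Y - X)^+$, where $(z)^+ := \max\{z,0\}$. By linearity and the identical distribution of $X$ and $Y$,
\begin{equation*}
\Exp{}{\max\{X,Y\}} = \Exp{}{X} + \Exp{}{(Y-X)^+} = \sum_{i=0}^{k-1} p_i a_i + \Exp{}{(Y-X)^+}.
\end{equation*}
Because the values are strictly ordered $a_0 < a_1 < \cdots < a_{k-1}$, the positive part $(a_j - a_i)^+$ is nonzero precisely when $j > i$, so
\begin{equation*}
\Exp{}{(Y-X)^+} = \sum_{i=0}^{k-1}\sum_{j=0}^{k-1} p_i p_j (a_j - a_i)^+ = \sum_{i=0}^{k-1}\sum_{j=i+1}^{k-1} p_i p_j (a_j - a_i),
\end{equation*}
which is exactly the second summand in the theorem's statement.

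There is no real obstacle here beyond bookkeeping: the argument is a direct application of Observation \ref{obs: opt offline} followed by an elementary computation that uses only the independence of consecutive stream entries and the ordering of $\{a_j\}$. If one prefers, the same result can be obtained by splitting $\Exp{}{\max\{X,Y\}} = \sum_{i,j} p_i p_j \max\{a_i,a_j\}$ according to whether $i<j$, $i=j$, or $i>j$, and then using the identity $\sum_i p_i a_i = \sum_i p_i^2 a_i + \sum_{i<j} p_i p_j (a_i + a_j)$ to convert the result into the claimed form; the $(Y-X)^+$ route is cleaner and avoids any algebraic rearrangement. Either way, the asymptotic per-step payoff is exactly $\sum_{i} p_i a_i + \sum_{i<j} p_i p_j (a_j - a_i)$, as asserted.
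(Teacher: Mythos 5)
Your proof is correct and arrives at exactly the claimed formula. The paper's own argument and yours both reduce, via Observation~\ref{obs: opt offline}, to computing $\Exp{}{\max\{X,Y\}}$ for two iid copies $X,Y$ (with boundary terms vanishing in the limit), but you diverge in how you evaluate it. The paper expands $\sum_{i,j} p_i p_j \max\{a_i,a_j\}$, splits the double sum at $j\leq i$ versus $j>i$, and then performs the algebraic rearrangement (adding and subtracting $\sum_i\sum_{j>i} p_ip_j a_i$) to massage the expression into the form $\sum_i p_i a_i + \sum_{i<j} p_ip_j(a_j-a_i)$. You instead use the identity $\max\{X,Y\} = X + (Y-X)^+$, which produces the two summands of the theorem directly and symmetrically: the first term is immediately $\Exp{}{X}=\sum_i p_i a_i$, and the second is $\Exp{}{(Y-X)^+}=\sum_{i<j}p_ip_j(a_j-a_i)$ with no further manipulation. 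Your route is shorter and more conceptual; the paper's is a brute-force split that nonetheless lands in the same place, and you correctly identify it as the alternative in your closing paragraph. Both are valid, and you also handle the $t=1$ boundary term correctly.
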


The remaining of this section is devoted in determining the asymptotics of any Threshold algorithm. 

\begin{theorem}  \label{thm: online iid input}
The relative expected payoff of the Threshold algorithm (asymptotic payoff per time step) that uses threshold $T=a_{r}$ and when the input is a random i.i.d. is
$
\frac
{\sum_{i=0}^{r-1} p_i a_i + 2\sum_{i=r}^{k-1} p_i a_i}
{\sum_{i=0}^{r-1} p_i + 2\sum_{i=r}^{k-1} p_i}.
$
\end{theorem}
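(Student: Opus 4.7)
My plan is to analyze the execution of the Threshold algorithm as a renewal reward process. The key observation is that, because values are i.i.d., whenever the observer returns to $L_0$ to read a fresh item, the entire future evolution is a fresh probabilistic copy of the same process, independent of the past. This lets me chop the time axis into i.i.d.\ cycles and then invoke the elementary renewal reward theorem.

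Concretely, I would call a time step $t$ a \emph{renewal epoch} if the observer is at $L_0$ at time $t$ and is seeing the freshly arrived item $v_t$ for the first time (in particular, $t=1$). Starting from a renewal epoch at time $t$, the behaviour of the Threshold algorithm with threshold $T=a_r$ splits into two cases:
\begin{itemize}
\item If $v_t < T$: the processor stays at $L_0$, so the next renewal epoch is $t+1$. Cycle length $L = 1$, cycle payoff $R = v_t$.
\item If $v_t \geq T$: the processor moves to $L_1$ at time $t+1$ and re-observes $v_t$ there; then at time $t+2$ it is back at $L_0$ reading the new item $v_{t+2}$, while $v_{t+1}$ is lost. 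The next renewal epoch is $t+2$. Cycle length $L = 2$, cycle payoff $R = 2v_t$.
\end{itemize}
Since the $v_t$'s are i.i.d., the successive pairs $(L_1,R_1),(L_2,R_2),\ldots$ are i.i.d.\ as well.

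A direct computation of the per-cycle expectations gives
\[
\mathbb{E}[R] = \sum_{i=0}^{r-1} p_i a_i + 2\sum_{i=r}^{k-1} p_i a_i, \qquad \mathbb{E}[L] = \sum_{i=0}^{r-1} p_i + 2\sum_{i=r}^{k-1} p_i,
\]
both finite and strictly positive. By the strong law of large numbers applied to the i.i.d.\ cycle lengths and payoffs (or equivalently the elementary renewal reward theorem), if $N(n)$ denotes the number of completed cycles by time $n$, then $\tfrac{1}{n}\sum_{j=1}^{N(n)} R_j \to \mathbb{E}[R]/\mathbb{E}[L]$ almost surely as $n\to\infty$, which is exactly the claimed relative expected payoff.

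The main subtlety, and the one I expect to need the most care, is the bookkeeping that ensures every time step is correctly attributed to exactly one cycle, in particular the ``lost'' step $t+1$ inside a high-value cycle (during which the processor is at $L_1$ re-observing $v_t$ and no new item is processed). Once that accounting is done, the payoff from the (at most one) unfinished cycle contributes only $O(1)$ and is absorbed in the asymptotic limit, completing the proof.
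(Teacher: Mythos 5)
Your proof is correct, and it takes a genuinely different route from the paper's. The paper works with the Markov chain on the observer's location: it sets $q_i := \Pr[Y_i=0]$, derives and solves the recurrence $1-q_{i+1} = Pq_i$ (via generating functions in an appendix), then splits $\mathbb{E}[X_i]$ according to $Y_i$, sums over $i$, and extracts the $n\to\infty$ limit from an explicit closed form. Your renewal-reward decomposition sidesteps all of that: because each return to $L_0$ with a fresh item regenerates the process (the value sitting unseen in $L_1$ at a renewal epoch never influences the future, and the new value in $L_0$ is independent of the past), the cycles $(L_j,R_j)$ are i.i.d.\ with $L_j\in\{1,2\}$ and $R_j\in\{v,2v\}$, and the elementary renewal-reward theorem immediately yields $\mathbb{E}[R]/\mathbb{E}[L]$, which is the stated ratio. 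This makes the form of the answer conceptually transparent (it literally \emph{is} a reward-per-cycle over time-per-cycle), whereas in the paper's derivation the ratio structure emerges only after algebraic simplification. One small point worth making explicit: the theorem speaks of relative \emph{expected} payoff, so you should invoke the expectation form of the renewal-reward theorem (or note that per-step payoffs are bounded by $2a_{k-1}$ and apply dominated convergence) rather than only the almost-sure SLLN version; you gesture at this with ``or equivalently the elementary renewal reward theorem,'' and that is the right tool.
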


\begin{proof}
In what follows we introduce abbreviations $Avg:= \sum_{i=0}^{k-1} p_ia_i$ and  $P:= \sum_{i=r}^{k-1} p_j$. Let also $Y_i$ be the random variable such that $Y_i=b$ indicates that, at time $i$, the observer is at $L_b$, $b\in \{0,1\}$. Let also $q_i:=\pr{}{Y_i=0}$. By definition, $q_0 = 1$. Next we observe that 
$$
1 - q_{i+1}
= \pr{}{Y_{i+1}=1}
= \pr{}{Y_{i}=0 ~\&~X_i \geq T }
= \pr{}{X_i \geq T~|~Y_{i}=0 }~ \pr{}{Y_{i}=0}
= P q_i .
$$
Technical Lemma~\ref{ref: rec solve for iid} in Appendix~\ref{sec: sol to recurrence for iid} implies that 
\begin{equation}
\label{equa: prob of position}
q_i =
\frac{1-(-1)^iP^i}{1+P}.
\end{equation}
Next we observe that $\Exp{}{X_i~|~Y_i=0}=Avg$. Also, if we set $Avg^+:=\sum_{s=r}^{k-1}a_sp_s$ we see that
$$
\Exp{}{X_i~|~Y_i=1}
=
\Exp{}{X_i~|~Y_{i-1}=0 ~\&~ X_{i-1}\geq T} = \frac{Avg^+}{P}.
$$
We now compute
\begin{align*}
\Exp{}{\sum_{i=1}^{n}X_i} &= 
\sum_{i=1}^{n} \Exp{}{X_i} 
=
\sum_{i=1}^{n} 
\left(
\pr{}{Y_i=0} \Exp{}{X_i~|~Y_i=0} 
+
\pr{}{Y_i=1} \Exp{}{X_i~|~Y_i=1} 
\right)\\
&\stackrel{\eqref{equa: prob of position}}{=}
\left(
\frac{n}{1+P}+\frac{P+(-P)^{n+1}}{(1+P)^2}
\right) \cdot Avg
+
\left(
\frac{n}{1+P}+\frac{(-P)^n-1}{(1+P)^2}
\right) \cdot Avg^+.
\end{align*}
Dividing the last quantity by $n$, and taking the limit $n\rightarrow \infty$ gives the promised formula. 
\qed \end{proof}

\subsection{Competitive Analysis for Uniform iid-Valued Input Streams}
\label{sec: comp analysis iid}

Note that the formulas derived in Section~\ref{sec: performance iid} hold for all iid-valued input streams. In this section we provide competitive analysis for input streams that are uniformly valued, i.e. when $p_i=\frac1k$, for all $i=0,\ldots,k-1$. That would be Theorems~\ref{thm: comp 2/3 for all} and \ref{thm: cd inputs combined} pertaining to uniform iid-valued random input streams. 

As before, denote by $A$ the sum of all values $a_i$, and by $L_{\bar r}$ the sum of the $\bar r$ largest values of them. Abbreviate also $\bar r/n$ by $c$. Suppose also that the Threshold algorithm uses as threshold the $\bar r$-th largest value of the $a_i$'s. We use Theorem \ref{thm:offiid} to find an upper bound for the performance of the offline algorithm:
\begin{equation}  
OPT = \frac1k A + \frac1{k^2}\sum_{i=0}^{k-1}\sum_{j=i+1}^{k-1}(a_j-a_i) =  \frac1k A + \frac1{k^2} \sum_{i=0}^{k-1} (2i-k+1) a_i
\leq 2\left( \frac{k-\bar r+1/2}{k^2}A + \frac{\bar r}{k^2}L_{\bar r} \right). 
\label{bopt iid}
\end{equation}
Next, using Theorem \ref{thm: online iid input} (which is written for threshold value $a_r=a_{k-1-\bar r}$) we obtain that for the Threshold algorithm 
\begin{equation}   \label{iidalg}
ALG = 
\frac 1k \cdot \frac{A + L_{\bar r}}{1+\bar r/k}.
\end{equation}

\begin{proof}[of Theorem~\ref{thm: comp 2/3 for all} for Uniform iid-Valued Streams]
When the Threshold value is the median, we have that $\bar r/k=1/2$. Using bounds~\eqref{bopt iid} and~\eqref{iidalg}, it is straightfoward then to see that $ALG, OPT$ are indeed within 2/3 of each other as promised. 
\qed \end{proof}

\begin{proof}[of Theorem~\ref{thm: cd inputs combined} for Uniform iid-Valued Streams Random Permutations]
When the input is a \cd\ stream, the Threshold algorithm can choose $\bar r$ satisfying $1-\bar r/n = 1-c = L_{\bar r}/A$. But then the competitive ratio becomes
$$
\frac
{
ALG
}
{
OPT
}
\stackrel{\eqref{bopt iid},\eqref{iidalg}}
\geq 
\frac12 \cdot \frac1{1+c} \cdot 
\frac
{
1+\frac{L_{\bar r}}A
}
{(1-c+o(c) )+c \frac{L_{\bar r}} A}
\rightarrow 
\frac12 \cdot \frac{2-c}{(1-c)(1+c)^2}.
$$
\qed\end{proof}


\section{Open Problems} \label{conclu:sec}

As described in the introduction, our model can be extended in many different ways. An obvious extension is
to have a bigger buffer, i.e., $k>2$ locations $L_0, L_1,\ldots,L_{k-1}$. In this case, there are different
possibilities of moving the processor within the buffer: a {\em single jump} model would require the processor
to always jump to $L_0$, while a {\em local jump} model would allow the processor to move close to its 
current location. Another obvious extension would be to consider general payoffs, i.e., allowing an item to
have different values in different buffer locations. Also, we leave open the potential increase in the power
of the processor if it is allowed to know the item it's going to miss in $L_0$ (if it moves to $L_0$ from $L_1$
in the next time slot).    

The threshold algorithm is probably the simplest algorithm one can use to tackle \watch. The obvious question
is whether there are better algorithms for the non-oblivious setting. Also, are there {\em upper bounds} that can be shown?    
In the oblivious setting, it is obvious that the thresholds we calculated above do not apply since we do
not know the payoffs ahead of time. In that setting, it is natural to consider adaptive algorithms, probably using
a prefix of the input in order to `learn' something about it before employing a threshold-like or some other strategy.



\bibliographystyle{plain}
\bibliography{refs}



\appendix

\section{Parts Omitted from Section~\ref{sec: performance perm}}

\subsection{Proof of Theorem \ref{thm: opt ran per}}
\label{sec: proof of thm: opt ran per}

If $X_i$ is the random variable whose value is the profit of the optimal algorithm at time $i$, we need to calculate $\Exp{}{\sum_{i=1}^{n+1} X_i}= \sum_{i=1}^{n+1} \Exp{}{X_i}$. When calculating the relative expected payoff, the extreme case observations $X_1, X_{n+1}$ have (asymptotically) 0 contribution. So we may focus on a fixed and arbitrary time step $i$ and evaluate $\Exp{}{X_i}$. 

At time $i$, let $T_0,T_1$ denote the random variables that are equal to values in the  two windows. For the optimal algorithm, we have that $X_i = \max\{T_0, T_1\}$. Since the input is a random permutation, and for all $i\not = j$, we observe that
$$
\pr{}{T_0=a_i ~\& ~T_1=a_j} = \frac{(n-2)!}{n!}=\frac{1}{n(n-1)}.
$$
Hence, using also Observation~\ref{obs: opt offline}, we have
\begin{align*}
\Exp{}{X_i}
&=
\sum_{i=0}^{n-1} ~~ \sum_{j \in \{0,1,\ldots,n-1\} \setminus \{i\}}
\pr{}{T_0=a_i ~\& ~T_1=a_j}
\max\{a_i, a_j\} \\
&=
\frac{1}{n(n-1)}
\left(
\sum_{i=0}^{n-1} \sum_{j =0}^{i-1}  \max\{a_i, a_j\}
+
\sum_{i=0}^{n-1} \sum_{j =i+1}^{n-1} \max\{a_i, a_j\}
\right) \\
&=
\frac{1}{n(n-1)}
\left(
\sum_{i=0}^{n-1} \sum_{j =0}^{i-1}  a_i
+
\sum_{i=0}^{n-1} \sum_{j =i+1}^{n-1} a_j
\right) \\
&=
\frac{1}{n(n-1)} \sum_{i=1}^{n-1} 2i \cdot  a_i.
\end{align*}

\subsection{Combinatorial Identities - Part I}
\label{sec: binomial identity lemma 1}

\begin{lemma}\label{lem: sum of f's}
\begin{align*}
(a)~~ &
\sum_{t=1}^n  f^{t}_{n,k} =
n 
\sum_{s=0}^{k-1} (-1)^s 
\frac{\binom{n-s}{k-1-s}}{\binom{n}{k-1}} \\
(b)~~ &
\sum_{t=1}^n \left( 1-\frac{k}{n-1} f^{t-1}_{n-1,k} \right)
=
n - n \sum_{s=0}^{k-1} (-1)^s 
\frac{\binom{n-1-s}{k-1-s}}{\binom{n}{k}}.
\end{align*}
\end{lemma}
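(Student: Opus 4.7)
My plan for part (a) is to swap the order of the double summation that arises after substituting the definition of $f^t_{n,k}$. Concretely, I would write
\[
\sum_{t=1}^n f^t_{n,k} = \frac{1}{\binom{n-1}{k-1}} \sum_{t=1}^n \sum_{s=0}^{\min\{t,k\}-1} (-1)^s \binom{n-1-s}{k-1-s}
\]
and then reverse the two sums: the inner constraint $\min\{t,k\}-1 \geq s$ is equivalent to $t \geq s+1$ (with $s \leq k-1$), so for each $s \in \{0,\ldots,k-1\}$ the index $t$ runs over $\{s+1,\ldots,n\}$, contributing the factor $n-s$. This gives
\[
\sum_{t=1}^n f^t_{n,k} = \frac{1}{\binom{n-1}{k-1}} \sum_{s=0}^{k-1} (-1)^s (n-s) \binom{n-1-s}{k-1-s}.
\]

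The second step is a one-line binomial identity: $(n-s)\binom{n-1-s}{k-1-s} = (n-k+1)\binom{n-s}{k-1-s}$, obtained by absorbing the $(n-s)$ factor into the top of the binomial and then readjusting the denominator. Combined with $(n-k+1)/\binom{n-1}{k-1} = n/\binom{n}{k-1}$ (both sides equal $(k-1)!(n-k+1)!/(n-1)!$), this turns the coefficient in front of $(-1)^s$ into exactly $n\binom{n-s}{k-1-s}/\binom{n}{k-1}$, which is the claimed form of (a). I expect these elementary factorial manipulations to be completely routine; there is no real obstacle in part (a) beyond bookkeeping.

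For part (b), my plan is to reduce it to part (a). Writing
\[
\sum_{t=1}^n \left(1 - \frac{k}{n-1} f^{t-1}_{n-1,k}\right) = n - \frac{k}{n-1} \sum_{t=1}^n f^{t-1}_{n-1,k},
\]
I would observe that the $t=1$ term of the remaining sum is $f^0_{n-1,k}=0$ because $\min\{0,k\}-1 = -1$ yields an empty sum in the definition of $f$. Hence the sum equals $\sum_{t=1}^{n-1} f^t_{n-1,k}$, which is precisely the expression handled by part (a) with $n$ replaced by $n-1$. Applying (a) in that form produces
\[
\sum_{t=1}^n \left(1 - \frac{k}{n-1} f^{t-1}_{n-1,k}\right) = n - k \sum_{s=0}^{k-1} (-1)^s \frac{\binom{n-1-s}{k-1-s}}{\binom{n-1}{k-1}}.
\]
The last step is to check that $k/\binom{n-1}{k-1} = n/\binom{n}{k}$; again both sides simplify to $k!(n-k)!/(n-1)!$, which rewrites the sum into the target form. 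The only subtle point in the whole proof is remembering that $f^{t-1}$ at $t=1$ is zero so that the re-indexing for part (a) applies cleanly; once that is in hand, both parts fall out of the same swap-and-simplify template.
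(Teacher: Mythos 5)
Your proof is correct and follows the same route as the paper's own argument: for (a), substitute the definition of $f^t_{n,k}$ and swap the order of the double sum so each $s$ contributes a factor $(n-s)$, then clean up with elementary binomial/factorial identities; for (b), peel off the constant $n$, note $f^0_{n-1,k}=0$, re-index, and invoke (a) with $n$ replaced by $n-1$. In fact your factorial bookkeeping (using $(n-s)\binom{n-1-s}{k-1-s}=(n-k+1)\binom{n-s}{k-1-s}$ together with $(n-k+1)/\binom{n-1}{k-1}=n/\binom{n}{k-1}$, and $k/\binom{n-1}{k-1}=n/\binom{n}{k}$) is more careful than the paper's displayed derivation, which has a couple of compensating coefficient/index typos in the first and last lines of part (a) and in the re-indexing step of part (b), even though the stated lemma is correct.
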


\begin{proof}

\begin{align*}
(a)~~\binom{n}{k-1} \sum_{t=1}^n  f^{t}_{n,k} 
&=
\sum_{t=1}^n
\sum_{s=0}^{\min\{t,k\}-1} (-1)^s \binom{n-1-s}{k-1-s}
\\
&=
\sum_{t=1}^{k}
\sum_{s=0}^{t-1} (-1)^s \binom{n-1-s}{k-1-s}
+
\sum_{t=k+1}^{n}
\sum_{s=0}^{k-1} (-1)^s \binom{n-1-s}{k-1-s}
\\
&=
\sum_{s=0}^{k-1} (-1)^s (k-s)\binom{n-1-s}{k-1-s}
+
(n-k)\sum_{s=0}^{k-1} (-1)^s \binom{n-1-s}{k-1-s} 
\\
&=
\sum_{s=0}^{k-1} (-1)^s (n-s)\binom{n-1-s}{k-1-s}\\
&=
n \sum_{s=0}^{k-1} (-1)^s 
\binom{n-s}{k-1-s}
\end{align*}
\begin{align*}
(b)~~\sum_{t=1}^n \left( 1-\frac{k}{n-1} f^{t-1}_{n-1,k} \right)
&=
n - \frac{k}{n-1} \sum_{t=1}^n f^{t-1}_{n-1,k}  \\
&\stackrel{(f^0_{n,k}=0)}=
n - \frac{k}{n-1} \sum_{t=2}^{n} f^{t-1}_{n-1,k}  \\
&=
n - \frac{k}{n-1} \sum_{t=1}^{n-1} f^{t-1}_{n-1,k}  \\
&\stackrel{(a)}=
n - k \sum_{s=0}^{k-1} (-1)^s 
\frac{\binom{n-1-s}{k-1-s}}{\binom{n-1}{k-1}} \\
&=
n - n \sum_{s=0}^{k-1} (-1)^s 
\frac{\binom{n-1-s}{k-1-s}}{\binom{n}{k}}. \\
\end{align*}
\qed \end{proof}

\subsection{Combinatorial Identities - Part 2}
\label{sec: binomial identity lemma 2}

\begin{lemma}\label{lem: asymptotics of coefs - online perm}
For every $k=k(n)$, let $\lim_{n\rightarrow \infty} \frac kn = c \in \Theta(1)$. Then 
$$
\lim_{n\rightarrow \infty} \frac{1}{\binom{n}{k-1}} \sum_{s=0}^{k-1} (-1)^s \binom{n-s}{k-1-s}=\frac 1{1+c}.
$$

\end{lemma}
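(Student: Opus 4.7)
The plan is to turn each term of the sum into a telescoping product, exploit its monotonicity to apply an alternating-series tail bound, and then pass to the limit.

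First I would compute the ratio $g_s := \binom{n-s}{k-1-s}/\binom{n}{k-1}$ explicitly. A direct factorial calculation gives $\binom{n-s-1}{k-2-s}/\binom{n-s}{k-1-s} = (k-1-s)/(n-s)$, so telescoping yields
$$g_s \;=\; \prod_{j=0}^{s-1} \frac{k-1-j}{n-j}, \qquad g_0 = 1.$$
I then record two properties of $(g_s)_{s=0}^{k-1}$. (i) \emph{Monotonicity and non-negativity}: each factor $(k-1-j)/(n-j)$ lies in $[0,1]$ (using $k\leq n$), so $g_s$ is non-negative and non-increasing, with $g_s \leq g_0 = 1$. (ii) \emph{Pointwise limit}: for each fixed $s$, every factor $(k-1-j)/(n-j) \to c$ as $n\to \infty$, so $g_s \to c^s$.

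With these in hand I would prove the lemma by a truncation argument. For any fixed $S_0 \geq 0$, write
$$\sum_{s=0}^{k-1}(-1)^s g_s \;=\; \sum_{s=0}^{S_0}(-1)^s g_s \;+\; \sum_{s=S_0+1}^{k-1}(-1)^s g_s.$$
The head is a fixed-length sum, so by (ii) it converges to $\sum_{s=0}^{S_0}(-c)^s$ as $n\to\infty$. The tail, by the standard alternating-series estimate for monotone non-increasing non-negative sequences (applicable thanks to (i)), has absolute value at most $g_{S_0+1}$, which in turn tends to $c^{S_0+1}$ as $n\to\infty$. Combining,
$$\limsup_{n\to\infty}\left|\sum_{s=0}^{k-1}(-1)^s g_s - \frac{1}{1+c}\right| \;\leq\; \left|\sum_{s=0}^{S_0}(-c)^s - \frac{1}{1+c}\right| \;+\; c^{S_0+1}.$$
Since $c<1$ (this is the range in which the paper applies the lemma), the right-hand side tends to $0$ as $S_0\to\infty$, giving the claimed limit $1/(1+c)$.

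The only delicate point is that the number of summands in $\sum_{s=0}^{k-1}(-1)^s g_s$ itself grows to infinity with $n$, so one cannot directly swap limit and sum term by term. The monotonicity in (i) is exactly what converts the naive term-by-term convergence from (ii) into a uniform-in-$n$ tail bound, and this is what makes the truncation argument go through cleanly. Everything else is a short geometric-series calculation.
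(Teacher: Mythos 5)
Your proof is correct and follows essentially the same strategy as the paper's: truncate the alternating sum at a fixed index, pass to the limit term by term in the finite head (where each normalized binomial ratio $g_s\to c^s$), and show the tail is uniformly small. The one place you diverge is the tail bound: you observe $g_s=\prod_{j=0}^{s-1}\frac{k-1-j}{n-j}$ is non-negative and non-increasing and then invoke the standard alternating-series estimate $\bigl|\sum_{s\ge S_0+1}(-1)^s g_s\bigr|\le g_{S_0+1}$, whereas the paper explicitly pairs consecutive terms, imposes a parity constraint on the truncation index via a carefully chosen $\epsilon$, and then bounds the resulting non-negative sum by a geometric series. Your version is cleaner -- it removes the parity bookkeeping and the ad hoc choice of $\epsilon$ -- but the underlying idea is identical. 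One small point worth making explicit: the alternating-series bound and the geometric tail both require $c<1$, which is fine here since $k<n$ forces $c\le 1$ and the density parameter in the paper's application is at most $1/2$, but it should be stated since the lemma's hypothesis $c\in\Theta(1)$ alone does not literally exclude $c=1$.
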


\begin{proof}
We show that for every $\epsilon>0$ (that can be chosen to be arbitrarily small), such that $\log{\epsilon}/\log{c}$ is an even integer, we have 
\begin{equation}   \label{limit}
\frac {1}{1+c} 
-\frac {\epsilon}{1+c} 
\leq \lim_{n\rightarrow \infty} \frac{1}{\binom{n}{k-1}} \sum_{s=0}^{k-1} (-1)^s \binom{n-s}{k-1-s}\leq 
\frac{1}{1+c}+
\frac{\sqrt{\epsilon}}{1-c}.
\end{equation}

Indeed, consider the odd constant $r:=\log{\epsilon}/\log{c}-1$. Then we have 
\begin{align*}
& \lim_{n\rightarrow \infty} \frac{1}{\binom{n}{k-1}} \sum_{s=0}^{k-1} (-1)^s \binom{n-s}{k-1-s} \\
&=
\lim_{n\rightarrow \infty}
\underbrace{
 \frac{1}{\binom{n}{k-1}} \sum_{s=0}^{r} (-1)^s \binom{n-s}{k-1-s} }_{A(n)}
+
\lim_{n\rightarrow \infty}
\underbrace{
\frac{1}{\binom{n}{k-1}} \sum_{s=r+1}^{k-1} (-1)^s \binom{n-s}{k-1-s} 
}_{B(n)}.
\end{align*}

Since $A(n)$ is a finite sum we have 
\begin{align}\notag
\lim_{n\rightarrow \infty} A(n) &= 
\sum_{s=0}^{r} (-1)^s \left( \lim_{n\rightarrow \infty} \frac{1}{\binom{n}{k-1}} \binom{n-s}{k-1-s} \right)\\
&= \notag
\sum_{s=0}^{r} (-1)^s c^s \\
&= \notag
\frac{1+(-1)^r c^{r+1}}{1+c}\\
&=\label{equa: first constant summands}
\frac{1-\epsilon}{1+c}.
\end{align}

Now, without loss of generality assume that $k$ is even (otherwise the last summand is positive and our bound below is still valid). We observe that 
\begin{align*}
B(n)&=
\sum_{s=r+1}^{k-1} (-1)^s \frac{\binom{n-s}{k-1-s}}{{\binom{n}{k-1}}}\\
&=
\sum_{s=(r+1)/2}^{(k-2)/2} 
\left(
\frac{\binom{n-s}{k-1-s}}{{\binom{n}{k-1}}} 
-
\frac{\binom{n-s-1}{k-2-s}}{{\binom{n}{k-1}}} 
\right)\\
&=
\left(1-\frac{k+1}n\right)
\sum_{s=(r+1)/2}^{(k-2)/2} 
\prod_{j=0}^{s+1} \frac{k-1-j}{n-1-j}.
\end{align*}
It is easy to see that $\frac{k-1-j}{n-1-j}$ are decreasing with $j$, and since each term is positive, we have 
\begin{equation}\label{equa: Bn upper bound}
0< B(n) < 
\left(1-\frac{k+1}n\right)
\sum_{s=(r+1)/2}^{(k-2)/2} 
\left( \frac{k-1}{n-1} \right)^{s+2}.
\end{equation}
The non-negativity of $\lim_{n\rightarrow \infty} B(n)$, together with~\eqref{equa: first constant summands}, imply the lower bound of \eqref{limit}. As for the upper bound, we introduce the shorthand $q:=(k-1)/(n-1)$, and we see that~\eqref{equa: Bn upper bound}, after we compute the sum in the right-hand side, implies that 
$$
B(n) < \frac{q^2}{1-q}\left( q^{(r+1)/2} - q^{k/2} \right) < \frac{q^{(r+5)/2}}{1-q}.
$$
Therefore 
$$
\lim_{n\rightarrow \infty} B(n) < 
\frac{\left(\lim_{n\rightarrow \infty}\frac{k-1}{n-1}\right)^{(r+5)/2}}{1-\lim_{n\rightarrow \infty}\frac{k-1}{n-1}}
=
\frac{c^{(r+5)/2}}{1-c} = \frac{c^2 \sqrt{\epsilon}}{1-c}.
$$
Combining the above, and given that $c\leq 1$, we conclude that \eqref{limit} holds.
\qed \end{proof}

\ignore{
\begin{lemma}  \label{lem:simple}
\begin{eqnarray}
\frac 1n \sum_{t=1}^n (1-\frac{k}{n-1} f_{n-1,k}^{t-1}) &=& 1-\frac 1{\binom{n}{k}} A_k  \label{an} \\
\frac 2n \sum_{t=1}^n f_{n,k}^t &=& \frac 2{\binom{n}{k}} \frac{n-k+1}{k} \left[ \frac{3n+1-(-2)^{k-2}}{3k} A_k+
\frac{(-2)^{k-2}}{k}\right]   \label{bn} 
\end{eqnarray}
where
\[ A_k=\sum_{s=0}^{k-1} (-1)^s \binom{n-1-s}{k-1-s}. \]
\end{lemma}
\begin{proof}
We set 
\begin{equation}  \label{B}
B=\sum_{s=0}^{k-1} (-1)^s \binom{n-s}{k-1-s}.
\end{equation}
Then we have the following
\begin{equation}  \label{BB}
B = \frac{n-k+1}{k} \left[A_k+\sum_{s=0}^{k-1} (-1)^s \binom{n-1-s}{k-2-s}\right] = \frac nk A_k - \frac 1k \sum_{s=0}^{k-1} s(-1)^s \binom{n-1-s}{k-1-s}.
\end{equation}
We define
\begin{equation}   \label{X}
X_1 = -1, X_k = \sum_{s=0}^{k} s(-1)^s \binom{n-1-s}{k-s}.
\end{equation}
After some algebra, we get the following recurrence:
\[ X_k+X_{k-1}+2A_k=0, X_1=-1 \]
which gives the solution
\[ X_k = \frac{(-2)^{k-1}-1}{3} A_k -(-2)^{k-1},\ k\geq 1\]
and by substituting in \eqref{BB}, and then in Lemma \ref{lem: sum of f's} we get the lemma.

Note that $A_k$ satisfies the recurrence
\begin{equation}   \label{eqn:A}
A_{k-1} = \frac{(3k-2-(-2)^{k-2})A_k+3(-2)^{k-2}}{3(n+1)}
\end{equation}
\qed \end{proof}
}

\section{Parts Omitted from Section~\ref{sec: performance iid}}

\subsection{Proof of Theorem \ref{thm:offiid}}
\label{sec: proof of off iid app}

A random i.i.d. input realizes into the sequence $a_{i_{1}}a_{i_{2}}\ldots a_{i_{n-1}}$ with probability $\prod_{j=1}^n p_{i_j}$, where $i_j \in \{0,k-1\}$, $j=1,\ldots, n$. 
By Observation~\ref{obs: opt offline}, it follows that the expected profit between time 2 and $n$ equals
\begin{equation}\label{equa: axp payoff offline 2 to n}
\sum_{i_n=0}^{k-1} 
\sum_{i_{n-1}=0}^{k-1} 
\ldots
\sum_{i_1=0}^{k-1} 
\prod_{j=1}^n p_{i_j}
\sum_{t=2}^n \max\{a_{i_t}, a_{i_{t-1}} \}.
\end{equation}
We focus at the case $t=2$, since it is immediate from the above formula that calculations will be identical for any $t\in \{2,\ldots, n\}$. The summand of \eqref{equa: axp payoff offline 2 to n} corresponding to $t=2$ equals
\begin{align*}
& \sum_{i_n=0}^{k-1} 
\sum_{i_{n-1}=0}^{k-1} 
\ldots
\sum_{i_1=0}^{k-1} 
\prod_{j=1}^n p_{i_j}
\max\{a_{i_2}, a_{i_{1}} \}
\\
=&
\sum_{i_n=0}^{k-1} 
\sum_{i_{n-1}=0}^{k-1} 
\ldots
\sum_{i_3=0}^{k-1} 
\prod_{j=3}^n p_{i_j}
\left(
\sum_{i_2=0}^{k-1} 
\sum_{i_1=0}^{k-1} 
p_{i_1} p_{i_2}
\max\{a_{i_2}, a_{i_{1}} \}
\right)
\\
=&
\sum_{i=0}^{k-1} 
\sum_{j=0}^{k-1} 
p_{i} p_{j}
\max\{a_{i}, a_{j} \}
\\
=&
\sum_{i=0}^{k-1} 
\sum_{j=0}^{i} 
p_{i} p_{j}
\max\{a_{i}, a_{j} \}
+
\sum_{i=0}^{k-1} 
\sum_{j=i+1}^{k-1} 
p_{i} p_{j}
\max\{a_{i}, a_{j} \}
\\
=&
\sum_{i=0}^{k-1} 
\sum_{j=0}^{i} 
p_{i} p_{j}
a_i
+
\sum_{i=0}^{k-1} 
\sum_{j=i+1}^{k-1} 
p_{i} p_{j}
a_j
\\
=&
\sum_{i=0}^{k-1} 
\sum_{j=0}^{k-1} 
p_{i} p_{j}
a_i
-
\sum_{i=0}^{k-1} 
\sum_{j=i+1}^{k-1} 
p_{i} p_{j}
a_i
+
\sum_{i=0}^{k-1} 
\sum_{j=i+1}^{k-1} 
p_{i} p_{j}
a_j
\\
=&
\sum_{i=0}^{k-1} 
p_{i}
a_i
+
\sum_{i=0}^{k-1} 
\sum_{j=i+1}^{k-1} 
p_{i} p_{j} (a_j-a_i).
\end{align*}
The last formula concludes the theorem.

\subsection{Solution to a Recurrence}
\label{sec: sol to recurrence for iid}

\begin{lemma}
\label{ref: rec solve for iid}
The solution to the recurrence $1 - q_{i+1} = P q_i$ is given by the formula $q_i =
\frac{1-(-1)^iP^i}{1+P}$.
\end{lemma}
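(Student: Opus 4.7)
The plan is to verify the closed form by straightforward induction on $i$, since the recurrence $q_{i+1} = 1 - Pq_i$ is first-order linear with constant coefficients. The base case is a direct substitution: plugging $i=1$ (the initial index consistent with the main text, where the observer is at $L_0$ at time 1) into $\frac{1-(-1)^iP^i}{1+P}$ yields $1$, matching the initial condition.

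For the inductive step, assume the formula holds at $i$ and apply the recurrence:
$$q_{i+1} \;=\; 1 - Pq_i \;=\; \frac{(1+P) - P\bigl(1-(-1)^iP^i\bigr)}{1+P} \;=\; \frac{1+(-1)^iP^{i+1}}{1+P} \;=\; \frac{1-(-1)^{i+1}P^{i+1}}{1+P},$$
which is the formula at $i+1$. This closes the induction.

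Alternatively, one can derive the formula from scratch by the standard technique for linear recurrences: the map $q \mapsto 1-Pq$ has the unique fixed point $q^\ast = \frac{1}{1+P}$, and the associated homogeneous recurrence $q_{i+1}=-Pq_i$ has solutions $C(-P)^i$; superposition gives $q_i = \frac{1}{1+P} + C(-P)^i$, and fitting the initial condition pins down $C = -\frac{1}{1+P}$, reproducing the claim. I do not foresee any real obstacle — the entire argument is a three-line calculation. The only point requiring care is bookkeeping of the indexing convention (whether the base index is $0$ or $1$), so that the initial-condition check is performed at the correct value of $i$.
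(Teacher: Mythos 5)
Your proof is correct, but it takes a genuinely different route from the paper. The paper proves this lemma with generating functions: it defines $f(x) = \sum_{i\geq 0} q_i x^i$, multiplies the recurrence by $x^{i+1}$, sums over $i$, solves the resulting algebraic equation to get $f(x) = \frac{x}{(1-x)(1+Px)}$, and extracts coefficients. Your induction (and your alternative fixed-point-plus-homogeneous-solution derivation) is more elementary and, for a first-order linear recurrence with constant coefficients, arguably cleaner; the generating-function machinery buys nothing here and is strictly more overhead.

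Your attention to the ``bookkeeping of the indexing convention'' is actually warranted and surfaces a small inconsistency in the paper's own exposition that you handle correctly. The main text states $q_0 = 1$, but the closed form $q_i = \frac{1-(-1)^iP^i}{1+P}$ evaluates to $0$ at $i=0$, and indeed the paper's generating-function derivation silently drops the $q_0$ boundary term (it passes from $\sum_{i\geq 0} q_{i+1}x^{i+1} = f(x) - q_0$ to just $f(x)$), which is only valid if $q_0 = 0$. Your choice to anchor the induction at $i=1$ with $q_1 = 1$ is the consistent reading: the formula is then correct for all $i\geq 1$, which is exactly the range used downstream in the expected-payoff sum $\sum_{i=1}^n \mathbb{E}[X_i]$.
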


\begin{proof}
This recurrence can
be solved using generating functions. Indeed, let us define the
function $f(x) := \sum_{i\geq 0} q_i x^i$. If we multiply the recurrence
by $x^{i+1}$ we see that $q_{i+1}x^{i+1}  + A q_i x^{i+1} = x^{i+1}$.
Summing all these recurrences for $i \geq 0$ we conclude that
$$
\sum_{i \geq 0} q_{i+1}x^{i+1}  + A x \sum_{i \geq 0} q_i  x^{i} 
= \sum_{i \geq 0} x^{i+1} .
$$
This last equation is easily seen to be equivalent to
$f(x) + Ax f(x) = \frac{x}{1-x}$. It follows that
$f (x) = \frac{x}{(1-x) (1+ Ax)}
$
from which we easily derive that
$
q_i =
\frac{1-(-1)^iP^i}{1+P}
$, as wanted. 
\qed \end{proof}

\end{document}